\pgfplotsset{
compat=1.17,
mystyle/.style={
    scale only axis,
    width=0.7\columnwidth,
    height=0.5\columnwidth,
    label style={inner sep=0, font=\normalsize},
    tick label style={font=\scriptsize},
    legend style={font=\scriptsize},
    mark size=3,
    major grid style={dashed},
    line width=0.8pt,
    axis line style = thin}
}
\newcommand{\defeq}{:=}
\newcommand{\defemph}[1]{\emph{#1}}
\newcommand{\set}[1]{\ensuremath{\mathcal{#1}}}
\newcommand{\NN}{\ensuremath{\mathbb{N}}}
\newcommand{\OCompl}[1]{\ensuremath{\mathcal{O}({#1})}}
\newcommand{\Fq}{\ensuremath{\mathbb F_{q}}}
\newcommand{\Fqm}{\ensuremath{\mathbb F_{q^m}}}
\newcommand{\Polyring}{\ensuremath{\Fqm[x]}}
\newcommand{\aut}{\ensuremath{\sigma}}
\newcommand{\autinv}{\ensuremath{\aut^{-1}}}
\newcommand{\der}{\ensuremath{\delta}}
\newcommand{\conj}[2]{\ensuremath{{#1}^{#2}}}
\newcommand{\SkewPolyring}{\ensuremath{\Fqm[x;\aut,\der]}}
\newcommand{\SkewPolyringZeroDer}{\ensuremath{\Fqm[x;\aut]}}
\newcommand{\opev}[3]{\ensuremath{{#1}(#2)_{#3}}}
\newcommand{\op}[2]{\ensuremath{\mathcal{D}_{#1}(#2)}}
\newcommand{\opLargeParens}[2]{\ensuremath{\mathcal{D}_{#1}\Bigl(#2\Bigr)}}
\newcommand{\opexp}[3]{\ensuremath{\mathcal{D}_{#1}^{#3}(#2)}}
\newcommand{\opexpinv}[3]{\ensuremath{\left(\mathcal{D}_{#1}^{#3}\right)^{-1}(#2)}}
\newcommand{\opfullparam}[3]{\ensuremath{\mathcal{D}_{#1}^{#3}(#2)}}
\newcommand{\opfullparamexp}[4]{\ensuremath{\left(\mathcal{D}_{#1}^{#3}\right)^{#4}(#2)}}
\newcommand{\genNorm}[2]{\ensuremath{\mathcal{N}_{#1}\left(#2\right)}}
\DeclareMathOperator{\wt}{wt}
\DeclareMathOperator{\rk}{rk}
\DeclareMathOperator{\diag}{diag}
\DeclareMathOperator{\GL}{GL}
\DeclareMathOperator{\Aut}{Aut}
\DeclareMathOperator{\Gal}{Gal}
\DeclareMathOperator{\Sym}{Sym}
\renewcommand{\vec}[1]{
	{\mathchoice{\mbox{\boldmath$\displaystyle #1$}}
	{\mbox{\boldmath$\textstyle #1$}}
	{\mbox{\boldmath$\scriptstyle #1$}}
	{\mbox{\boldmath$\scriptscriptstyle #1$}}}
}
\newcommand{\mat}[1]{\ensuremath{\bm{#1}}}
\renewcommand{\a}{\vec{a}}
\renewcommand{\b}{\vec{b}}
\renewcommand{\c}{\vec{c}}
\newcommand{\g}{\vec{g}}
\newcommand{\n}{\vec{n}}
\renewcommand{\v}{\vec{v}}
\newcommand{\x}{\vec{x}}
\newcommand{\y}{\vec{y}}
\newcommand{\vecalpha}{\ensuremath{\boldsymbol{\alpha}}}
\newcommand{\vecbeta}{\ensuremath{\boldsymbol{\beta}}}
\newcommand{\A}{\mat{A}}
\newcommand{\B}{\mat{B}}
\newcommand{\G}{\mat{G}}
\renewcommand{\H}{\mat{H}}
\newcommand{\M}{\mat{M}}
\renewcommand{\S}{\mat{S}}
\newcommand{\0}{\ensuremath{\mathbf 0}}
\DeclareMathOperator{\Id}{Id}
\newcommand{\opVandermonde}[3]{\ensuremath{\mat{V}_{#1}(#2)_{#3}}}
\newcommand{\opMoore}[3]{\ensuremath{\mathfrak{M}_{#1}(#2)_{#3}}}
\newcommand{\SumRankWeight}{\ensuremath{\wt_{\Sigma R}}}
\newcommand{\SumRankDist}{d_{\ensuremath{\Sigma}R}}
\newcommand{\mycode}[1]{\ensuremath{\mathcal{#1}}}
\newcommand{\linRS}[1]{\ensuremath{\mathrm{LRS}[#1]}}
\newcommand{\GLRS}[1]{\ensuremath{\mathrm{GLRS}[#1]}}
\newcommand{\len}{\ensuremath{n}}
\newcommand{\lenShot}[1]{\ensuremath{\len_{#1}}}
\newcommand{\shot}[2]{\ensuremath{{#1}^{(#2)}}}
\newcommand{\shots}{\ensuremath{\ell}}
\newcommand{\linIsometries}{\mathrm{LI}(\Fqm^{n})}
\newcommand{\semilinIsometries}{\mathrm{SI}(\Fqm^{n})}
\newcommand{\actLin}{\mathrm{act}_{\mathrm{LI}}}
\newcommand{\actSemilin}{\mathrm{act}_{\mathrm{SI}}}
\begin{document}

    \title{
        Distinguishing and Recovering Generalized Linearized Reed--Solomon Codes
    }

    \author{Felicitas Hörmann$^{1,2}$\,\orcidlink{0000-0003-2217-9753} \and Hannes Bartz$^1$\,\orcidlink{0000-0001-7767-1513} \and Anna-Lena Horlemann$^2$\,\orcidlink{0000-0003-2685-2343}}

    \institute{
        $^1$~Institute of Communications and Navigation, German Aerospace Center (DLR), Germany \\
        \email{$\{$felicitas.hoermann, hannes.bartz$\}$@dlr.de}
        \\[.25cm]
        $^2$~School of Computer Science, University of St.\ Gallen, Switzerland \\
        \email{anna-lena.horlemann@unisg.ch}
    }

    \maketitle

	\begin{abstract}
		We study the distinguishability of \ac{LRS} codes by defining and analyzing analogs of the square-code and the Overbeck distinguisher for classical \acl{RS} and Gabidulin codes, respectively. Our main results show that the square-code distinguisher works for \ac{GLRS} codes defined with the trivial automorphism, whereas the Overbeck-type distinguisher can handle \ac{LRS} codes in the general setting. We further show how to recover defining code parameters from any generator matrix of such codes in the zero-derivation case. For other choices of automorphisms and derivations simulations indicate that these distinguishers and recovery algorithms do not work. The corresponding \ac{LRS} and \ac{GLRS} codes might hence be of interest for code-based cryptography.
	\end{abstract}

	\acresetall

	\section{Introduction}

	Researchers have made tremendous progress in the design and realization of quantum computers in the last decades.
	As it was shown that quantum computers are capable of solving both the prime-factorization and the discrete-logarithm
	problem in polynomial time, attackers can break most of today's public-key cryptosystems (as e.g.\ RSA and ECC) if
	they have a powerful quantum computer at hand.
	The urgent need for quantum-safe cryptography is obvious, especially since \emph{store now, harvest later} attacks
	allow to save encrypted data now and decrypt it as soon as the resources are available.
	This is reflected in the standardization process that NIST started for post-quantum cryptography in 2016.
	The first \acp{KEM} were standardized in July 2022 after three rounds of the competition and some of the submissions
	were forwarded to a fourth round for further investigation~\cite{NISTreport2022}.
	Three out of the four remaining \ac{KEM} candidates in round four are code-based.
	Moreover, the fourth one (namely, SIKE) was recently broken~\cite{wouter2022}.
	This explains why the community has high hopes and trust in coding-related primitives
	even though no code-based candidate has been chosen for standardization so far.

	Code-based cryptography mostly relies on McEliece-like schemes that are inspired by the seminal paper~\cite{McE1978}.
	The main idea is to choose a generator matrix of a secret code and disguise its algebraic structure by applying some,
	in most cases isometric or near-isometric, transformations such that an adversary cannot derive the known (or any other) efficient decoder from the
	mere knowledge of the scrambled matrix.

    McEliece-like instances based on a variety of code families and disguising functions in the Hamming and the rank
	metric were proposed over time.
	For example, the works~\cite{sidelnikov1992insecurity,BerLoi2005} are based on \ac{RS} codes in the Hamming metric and the \acs{GPT}
	system and its variants (see e.g.\ \cite{GabParTre1991,Gabidulin2008,Rashwan2010smartApproach}) use Gabidulin codes in the rank metric.
	But in both cases, polynomial-time attacks were proposed and broke several of the systems:
	\ac{RS} codes can be distinguished from random codes by using the square-code approach introduced
	in~\cite{sidelnikov1992insecurity,wieschebrink2010cryptanalysis} and Overbeck-like strategies~\cite{Ove2005,Ove2007,Ove2007PhD,horlemann2016rankBased,horlemann2018extension}
	yield a distinguisher for Gabidulin codes.
	The works also explain the recovery of an equivalent secret key which enables the attacker to decrypt with respect
	to the public code.

	The sum-rank metric was first established in 2005~\cite[Sec.~III]{lu2005unified} and generalizes both the Hamming and the
	rank metric.
	It is thus natural to investigate if McEliece-like cryptosystems based on sum-rank-metric codes can ensure secure
	communication.
	The work~\cite{PucRenRos2020} considers generic decoding of sum-rank-metric codes and hence gives guidance for the
	security-level estimation of sum-rank-based cryptography.
    Mart{\'\i}nez-Pe{\~n}as~\cite{Mar2018} introduced \ac{LRS} codes which are the sum-rank analogs of \ac{RS} and
	Gabidulin codes and thus could be a first naive choice for secret codes in McEliece-like systems.

	We focus on the task of distinguishing \ac{LRS} codes from random codes and present two distinguishers that are
	inspired by the square-code idea and by Overbeck's approach, respectively.
	Our results can be applied to distinguish \ac{GLRS} codes which we define as \ac{LRS} codes with nonzero
	block multipliers.
	As this more general code family is closed under semilinear equivalence, the methods also apply to \ac{GLRS} codes with isometric disguising.
	We finally focus on the zero-derivation case and show how an efficient decoding algorithm can be recovered from a \ac{GLRS}
	generator matrix that was disguised by means of semilinear isometries.

	\section{Preliminaries}
	
	Let us first gather some notions and results that we will use later on.
	In particular, let $q$ be a prime power and denote the finite field of order $q$ by $\Fq$.
	For $m \geq 1$, we further consider the extension field $\Fqm \supseteq \Fq$ of order $q^m$.
	For a matrix $\M  \in \Fqm^{k\times n}$, let $\langle M \rangle$ denote the $\Fqm$-linear vector space spanned by the rows of $\M$.

	\subsection{The Sum-Rank Metric}\label{subsec:sum-rank_metric}

	An \defemph{(integer) composition} of $\len \in \NN^{\ast}$ into $\shots \in \NN^{\ast}$ parts (or
	\emph{$\shots$-composition} for short) is a vector $\n = (\lenShot{1}, \dots, \lenShot{\shots}) \in \NN^{\shots}$ with
	$\lenShot{i} > 0$ for all $1 \leq i \leq \shots$ that satisfies $\len = \sum_{i=1}^{\shots} n_i$.
	If $\n$ contains $k$ distinct elements $\tilde{n}_1, \dots, \tilde{n}_k$, let $\lambda_j$ denote the number of
	occurrences of $\tilde{n}_j$ in $\n$ for each $j = 1, \dots, k$ and write $\lambda(\n) \defeq
	(\lambda_1, \dots, \lambda_k) \in \NN^{k}$.

	Throughout the paper, $n \in \NN^{\ast}$ usually refers to the length of the considered codes and we will stick to one
	particular $\shots$-composition $\n = (n_1, \dots, n_{\shots})$ of $n$.
	We often divide vectors $\x \in \Fqm^{n}$ or matrices $\M \in \Fqm^{k \times n}$ with $k \in \NN^{\ast}$ into blocks
	with respect to $\n$.
	Namely, we write $\x = (\shot{\x}{1} \mid \dots \mid \shot{\x}{\shots})$ with $\shot{\x}{i} \in \Fqm^{n_i}$ for all
	$1 \leq i \leq \shots$ and $\M = (\shot{\M}{1} \mid \dots \mid \shot{\M}{\shots})$ with $\shot{\M}{i} \in
	\Fqm^{k \times n_{i}}$ for all $1 \leq i \leq \shots$, respectively.

	The \defemph{sum-rank weight} of a vector $\x \in \Fqm^{n}$ (with respect to $\n$) is defined as
	$\SumRankWeight^{\n}(\x) = \sum_{i=1}^{\shots} \rk_q(\shot{\x}{i})$, where $\rk_q(\shot{\x}{i})$ is the maximum
	number of $\Fq$-linearly independent entries of $\shot{\x}{i}$ for $i = 1, \dots, \shots$.
	The hereby induced \defemph{sum-rank metric} is given by $\SumRankDist^{\n}(\x, \y) = \SumRankWeight^{\n}(\x - \y)$
	for $\x, \y \in \Fqm^{n}$.
	Since we always consider the same $\shots$-composition $\n$, we write $\SumRankWeight$ and $\SumRankDist$
	for simplicity.

	An \emph{$\Fqm$-linear sum-rank-metric code} $\mycode{C}$ is an $\Fqm$-subspace of $\Fqm^{n}$.
	Its length is $n$ and its dimension $k \defeq \dim(\mycode{C})$.
	We further define its \emph{minimum (sum-rank) distance} as
	\begin{equation}
	    d(\mycode{C}) \defeq \{ \SumRankDist(\c_1, \c_2) : \c_1, \c_2 \in \mycode{C}, \c_1 \neq \c_2 \} \\
		= \{ \SumRankWeight(\c) : \c \in \mycode{C}, \c \neq \0 \},
	\end{equation}
	where the last equality follows from the linearity of the code.
	A matrix $\G \in \Fqm^{k \times n}$ is a \emph{generator matrix} of $\mycode{C}$ if $\mycode{C} = \langle \G \rangle$.
	If $\mycode{C}$ is the kernel of a matrix $\H \in \Fqm^{(n-k) \times n}$, $\H$ is called a \emph{parity-check matrix}
	of $\mycode{C}$.
	The code generated by any parity-check matrix $\H$ of $\mycode{C}$ is the \emph{dual code} of $\mycode{C}$ and
	denoted by $\mycode{C}^{\perp}$.

	\subsection{Automorphisms, Derivations, and Conjugacy}

	An \defemph{automorphism} $\theta$ on $\Fqm$ is a mapping $\theta: \Fqm \to \Fqm$ with the properties
	$\theta(a + b) = \theta(a) + \theta(b)$ and $\theta(a \cdot b) = \theta(a) \cdot \theta(b)$ for all $a, b \in \Fqm$.
	We denote the group of all $\Fqm$-automorphisms by $\Aut(\Fqm)$.
	Note that every automorphism fixes a subfield of $\Fqm$ pointwise.
	The cyclic subgroup of $\Aut(\Fqm)$, whose elements fix (at least) $\Fq$ pointwise, is called the \emph{Galois group} of the
	field extension $\Fqm/\Fq$ and we denote it by $\Gal(\Fqm/\Fq)$.
	Every $\aut \in \Gal(\Fqm/\Fq)$ is a power of the \emph{Frobenius automorphism} (with respect to $q$) that is defined as
	\begin{equation}
		\varphi: \Fqm \to \Fqm,
		\quad a \mapsto a^q.
	\end{equation}
	Namely, $\aut \in \{\varphi^0, \dots, \varphi^{m - 1}\}$.
	The fixed field of $\aut = \varphi^{l}$ with $l \in \{0, \dots, m-1\}$ is $\Fq^{\gcd(l, m)}$.
	For simplicity, assume in the following that $\aut = \varphi^{l}$ with $\gcd(l, m) = 1$, i.e., let the fixed field of $\aut$ be $\Fq$.

	A \defemph{$\aut$-derivation} is a map $\der: \Fqm \to \Fqm$ that satisfies both $\der(a + b) = \der(a) + \der(b)$ and
	$\der(a \cdot b) = \der(a) \cdot b + \aut(a) \cdot \der(b)$ for all $a, b \in \Fqm$.
	In our finite-field setting, every $\aut$-derivation is an inner derivation, that is
	$\der = \gamma (\Id - \aut)$ for a $\gamma \in \Fqm$ and the identity $\Id$ on $\Fqm$.
	When the automorphism $\aut$ is clear from the context, we often write $\der_{\gamma}$ to refer to the derivation
	corresponding to $\gamma \in \Fqm$.

	For a fixed pair $(\aut, \der)$, we can group the elements of $\Fqm$ with respect to an equivalence relation called $(\aut, \der)$-conjugacy:

	\begin{definition}
		Two elements $a, b \in \Fqm$ are called \defemph{$(\aut, \der)$-conjugate} if there is a $c \in \Fqm^{\ast}$ with
		\begin{equation}
			\conj{a}{c} \defeq \aut(c) a c^{-1} + \der(c) c^{-1} = b.
		\end{equation}
		All conjugates of $a \in \Fqm$ are collected in the respective \defemph{conjugacy class}
		\begin{equation}
			\set{K}(a) \defeq \{ a^c : c \in \Fqm^{\ast} \} \subseteq \Fqm.
		\end{equation}
		For $\der = \der_{\gamma}$ with $\gamma \in \Fqm$, the class $\textsc{\set{K}}(\gamma)$ is called trivial conjugacy class.
	\end{definition}

	\subsection{Isometries in the Sum-Rank Metric}

	As most code-based cryptosystems use isometric disguising, we quickly recall the characterization of sum-rank isometries.
	Note that we have to differentiate between $\Fq$-linear and $\Fqm$-linear isometries.
	The former were studied in~\cite[Prop.~4.26]{Neri2022}, whereas the latter were considered
	in~\cite{Mar2020a,AlfLobNer2022}.
	Precisely, the special case of equal block lengths (i.e., $\n =
	\left( \frac{\len}{\shots}, \dots, \frac{\len}{\shots} \right)$) was treated in~\cite[Thm.~2]{Mar2020a} and the
	generalization to arbitrary block lengths and the extension to semilinear isometries is due
	to~\cite[Sec.~3.3]{AlfLobNer2022}.
	We focus on $\Fqm$-(semi)linear isometries because of our motivation from code-based cryptography.
	Namely, we consider the following:

	\begin{definition}
		A bijective map $\iota: \Fqm^{n} \to \Fqm^{n}$ is a \defemph{(sum-rank) isometry} on $\Fqm^{n}$ if it is
		sum-rank preserving, that is if $\SumRankDist(\x) = \SumRankDist(\iota(\x))$ holds for all $\x \in \Fqm^{n}$.
		We call an isometry \defemph{linear} when it is $\Fqm$-linear.
		A \defemph{semilinear} isometry $\iota$ is additive and there exists an $\Fqm$-automorphism $\theta$ such that
		$\iota$ fulfills $\iota(a\x) = \theta(a) \iota(\x)$ for all $a \in \Fqm$ and all $\x \in \Fqm^{n}$.
	\end{definition}

	Recall that the \defemph{general linear group} $\GL(n, \Fq)$ contains all full-rank matrices of size $n \times n$
	over $\Fq$ and that the \defemph{symmetric group} $\Sym_{n}$ consists of all permutations of $n$ elements.
	We introduce the notations
	\begin{align}
		\GL(\n, \Fq) &\defeq \GL(\lenShot{1}, \Fq) \times \dots \times \GL(\lenShot{\shots}, \Fq) \\
		\text{and} \quad
		\Sym_{\lambda(\n)} &\defeq \Sym_{\lambda_1} \times \dots \times \Sym_{\lambda_{k}},
	\end{align}
	where $\lambda(\n)$ counts the occurrences of distinct entries of $\n$ (see~\autoref{subsec:sum-rank_metric}).
	Note that $\Sym_{\lambda(\n)}$ is a subgroup of $\Sym_{\sum_j \lambda_j} = \Sym_{\shots}$.

	\begin{theorem}[Sum-Rank Isometries~\cite{Mar2020a,AlfLobNer2022}]\label{thm:sum-rank_isometries}
		The group of $\Fqm$-linear isometries on $\Fqm^{n}$ is
		\begin{equation}
			\linIsometries \defeq \left( (\Fqm^{\ast})^{\shots} \times \GL(\n, \Fq) \right) \rtimes \Sym_{\lambda(\n)}.
		\end{equation}
		Its action $\actLin: \linIsometries \times \Fqm^{n} \to \Fqm^{n}$ is defined as
		\begin{equation}
			\actLin(\iota, \x) \defeq
			\left( c_1 \shot{\x}{\pi^{-1}(1)} \M_1 \mid \dots \mid
			c_{\shots} \shot{\x}{\pi^{-1}(\shots)} \M_{\shots} \right)
		\end{equation}
		for $\iota = \left( (c_1, \dots, c_{\shots}), (\M_1, \dots, \M_{\shots}), \pi \right)$ and
		$\x \in \Fqm^{n}$.
		Similarly, the group of $\Fqm$-semilinear isometries on $\Fqm^{n}$ is
		\begin{align}
			\semilinIsometries &\defeq \linIsometries \rtimes \Aut(\Fqm)
		\end{align}
		and its action $\actSemilin: \semilinIsometries \times \Fqm^{n} \to \Fqm^{n}$ is given by
		\begin{equation}
			\actSemilin((\iota, \theta), \x) \defeq \theta(\actLin(\iota, \x))
		\end{equation}
		for $(\iota, \theta) \in \semilinIsometries$ and $\x \in \Fqm^{n}$.
	\end{theorem}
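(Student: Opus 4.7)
The plan is to prove both claims in two steps: first, verify that every map of the stated form is sum-rank preserving; second, show conversely that every $\Fqm$-(semi)linear isometry on $\Fqm^{n}$ takes this form.

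The forward direction is a routine check. For $\iota = ((c_1,\dots,c_\shots),(\M_1,\dots,\M_\shots),\pi) \in \linIsometries$, each scalar $c_i \in \Fqm^\ast$ preserves $\rk_q$ on the $i$-th block, each $\M_i \in \GL(\lenShot{i},\Fq)$ acts as an invertible $\Fq$-linear change of coordinates and hence also preserves $\rk_q$, and the permutation $\pi \in \Sym_{\lambda(\n)}$ only reorders blocks of equal length, leaving the sum $\sum_{i=1}^{\shots} \rk_q(\shot{\x}{i})$ invariant. Restricting $\pi$ to $\Sym_{\lambda(\n)}$ rather than $\Sym_{\shots}$ is necessary since blocks of unequal length cannot be swapped by a dimension-preserving map. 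In the semilinear case, any $\theta \in \Aut(\Fqm)$ maps $\Fq$ bijectively onto itself (as the unique subfield of order $q$), so applying $\theta$ componentwise preserves $\Fq$-linear relations and hence $\rk_q$ on each block.

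For the converse I would proceed in three stages. First, extract the block permutation from the action of $\iota$ on sum-rank-one vectors. Writing $V_i \defeq \{\x \in \Fqm^{n} : \shot{\x}{j} = \0 \text{ for all } j \neq i\}$, suppose $\iota(\u)$ and $\iota(\u')$ lie in different blocks for some rank-one vectors $\u,\u' \in V_i$. Picking block parts $\shot{\u}{i} = \alpha \v$ and $\shot{\u'}{i} = \alpha' \v'$ with $\Fq$-linearly independent $\v,\v' \in \Fq^{\lenShot{i}}$ and $\alpha'/\alpha \in \Fq^\ast$ (possible whenever $\lenShot{i} \geq 2$) forces $\u + \u' \in V_i$ to have sum-rank weight one, while $\iota(\u+\u') = \iota(\u) + \iota(\u')$ has sum-rank weight at least two, a contradiction. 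The boundary case $\lenShot{i}=1$ is handled directly, since $V_i$ is one-dimensional over $\Fqm$ and its image must land in a block of matching dimension. This yields a permutation $\pi$ with $\iota(V_i) = V_{\pi(i)}$, which lies in $\Sym_{\lambda(\n)}$ by comparing block dimensions. Second, after composing with the corresponding block-permutation isometry, the problem reduces block-wise to the classification of $\Fqm$-linear rank-metric isometries of $\Fqm^{\lenShot{i}}$; by a classical Hua-type theorem, these are exactly the maps $\shot{\x}{i} \mapsto c_i \shot{\x}{i} \M_i$ with $c_i \in \Fqm^\ast$ and $\M_i \in \GL(\lenShot{i},\Fq)$. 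Third, in the semilinear case, the automorphism $\theta$ is uniquely determined by the scaling behavior of $\iota$ (via $\iota(a\x_0) = \theta(a)\iota(\x_0)$ for any fixed $\x_0$ with $\iota(\x_0) \neq \0$), so $\theta^{-1} \circ \iota$ is $\Fqm$-linear and the previous case applies.

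The main obstacle is the block-respecting step in the converse, where one must rule out mixings between different blocks; the careful choice of rank-one vectors above is the crux of that argument. Everything else is either a direct verification or an invocation of the well-known rank-metric isometry classification and, in the semilinear case, a routine factorization through $\theta$.
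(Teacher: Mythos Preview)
The paper does not prove this theorem; it is quoted from \cite{Mar2020a,AlfLobNer2022} as a known result, so there is no in-paper argument to compare against.

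Your outline is broadly on the right track, but the block-respecting step has a genuine gap. You suppose for contradiction that some rank-one $\u,\u'\in V_i$ land in different blocks under $\iota$, and then impose the additional constraints that $\v,\v'$ are $\Fq$-independent and $\alpha'/\alpha\in\Fq^\ast$. These constraints need not hold for the \emph{given} pair: take $\shot{\u}{i}=\alpha e_1$ and $\shot{\u'}{i}=\beta e_2$ with $e_1,e_2$ standard basis vectors and $\alpha,\beta\in\Fqm$ linearly independent over $\Fq$; then $\u+\u'$ already has rank two and no contradiction arises. The clean repair is a connectivity argument: declare two rank-one vectors in $V_i$ adjacent when their sum has rank at most one, note that this happens precisely when they share the same $\Fq$-direction or have $\Fq$-proportional $\Fqm$-scalars, and observe that any two rank-one vectors are joined in two steps via $\alpha\v\sim\alpha\v'\sim\alpha'\v'$. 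The assignment ``which block does $\iota(\cdot)$ hit'' is constant along adjacencies, hence constant on all rank-one vectors of $V_i$; since these span $V_i$ over $\Fqm$, you obtain $\iota(V_i)\subseteq V_{\pi(i)}$. With this fix the remaining steps---the blockwise reduction to the classical rank-metric isometry classification, and factoring out $\theta$ in the semilinear case---go through as you describe.
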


	Since MacWilliams' extension theorem does not hold in this general setting (see~\cite[Ex.~2.9~(a)]{barra2015}
	for a counterexample in the rank-metric case), code equivalence in the sum-rank
	metric is defined by means of isometries of the whole space (cp.~\cite[Def.~3.9]{AlfLobNer2022}).

	\begin{definition}
		Two sum-rank-metric codes $\mycode{C}, \mycode{D} \subseteq \Fqm^{n}$ are called
		\defemph{linearly equivalent} if there is a linear isometry $\iota \in \linIsometries$ such that
		\begin{equation}
			\actLin(\iota, \mycode{C}) \defeq \{ \actLin(\iota, \c) : \c \in \mycode{C} \} = \mycode{D}.
		\end{equation}
		They are \defemph{semilinearly equivalent} if there is $(\iota, \theta) \in \semilinIsometries$ such that
		\begin{equation}
			\actSemilin((\iota, \theta), \mycode{C}) \defeq \{ \actSemilin((\iota, \theta), \c) : \c \in \mycode{C} \}
			= \mycode{D}.
		\end{equation}
	\end{definition}

	\subsection{Skew Polynomials}

	The \defemph{skew-polynomial ring} $\SkewPolyring$ is defined as the set of polynomials $f(x) = \sum_i f_i x^i$
	with finitely many nonzero coefficients $f_i \in \Fqm$.
	It is equipped with conventional polynomial addition but the multiplication is determined by the rule
	$xa = \aut(a)x + \der(a)$.
	Similar to conventional polynomial rings, we define the \defemph{degree} of a nonzero skew polynomial
	$f(x) = \sum_i f_i x^i \in \SkewPolyring$ as $\deg(f) \defeq \max \{i : f_i \neq 0\}$ and set the degree of the zero
	polynomial to $-\infty$.

	Note that despite lots of similarities to $\Polyring$, the same evaluation strategy (i.e., $f(c) = \sum_i f_i c^i$ for
	$c \in \Fqm$) does not work in this setting.
	Instead, the literature provides two different ways to adequately evaluate skew polynomials: remainder evaluation
	and generalized operator evaluation.
	We will focus on the latter in this work.

	For $a, b \in \Fqm$, define the operator
	\begin{equation}
		\op{a}{b} \defeq \aut(b) a + \der(b)
	\end{equation}
	and its powers $\opexp{a}{b}{i} \defeq \op{a}{\opexp{a}{b}{i-1}}$ for $i \geq 0$ (with $\opexp{a}{b}{0} = b$ and
	$\opexp{a}{b}{1} = \op{a}{b}$).
	For $\a = (a_1, \dots, a_{\shots}) \in \Fqm^{\shots}$, and $\B \in \Fqm^{k \times n}$, we write
	$\op{\a}{\B} \defeq (\op{a_1}{\shot{\B}{1}} \mid \dots \mid \op{a_{\shots}}{\shot{\B}{\shots}})$, where
	$\op{a_i}{\shot{\B}{i}}$ stands for the elementwise application of $\op{a_i}{\cdot}$ to the entries of
	$\shot{\B}{i}$ for $1 \leq i \leq \shots$.
	This notation also applies to vectors $\b \in \Fqm^{n}$ and can be extended to powers of the operator.

	In the zero-derivation case, the $i$-fold application of the above defined operator can be expressed as
	\begin{equation}
	    \opexp{a}{b}{i} = \aut^{i}(b) \cdot \genNorm{i}{a}
	\end{equation}
	for any $a, b \in \Fqm$ and $i \in \NN^{\ast}$.
	Here, $\genNorm{i}{a} \defeq \prod_{j=0}^{i-1} \aut^{j}(a) = \aut^{i-1}(a) \dots \aut(a) \cdot a$ denotes
	the \emph{generalized power function}.

	\begin{lemma}\label{lem:opProductRule}
		The equality $\op{a}{bc} = \aut(b) \op{a}{c} + \der(b) c$ holds for any $a, b, c \in \Fqm$.
	\end{lemma}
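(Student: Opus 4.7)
The plan is to prove the identity by unfolding the definition of the operator $\op{a}{\cdot}$ and then applying, in turn, the two defining properties of the pair $(\aut, \der)$: multiplicativity of the automorphism $\aut$ and the twisted Leibniz rule for the $\aut$-derivation $\der$.

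First I would write
\begin{equation}
    \op{a}{bc} = \aut(bc)\, a + \der(bc)
\end{equation}
straight from the definition $\op{a}{b} \defeq \aut(b) a + \der(b)$ given in the excerpt. Next I would substitute $\aut(bc) = \aut(b)\aut(c)$, since $\aut \in \Aut(\Fqm)$ is in particular multiplicative, and $\der(bc) = \der(b) c + \aut(b) \der(c)$ by the defining property of a $\aut$-derivation. This yields
\begin{equation}
    \op{a}{bc} = \aut(b)\aut(c)\, a + \der(b)\, c + \aut(b)\, \der(c).
\end{equation}
Finally I would factor out $\aut(b)$ from the first and third summands, which gives
\begin{equation}
    \op{a}{bc} = \aut(b)\bigl(\aut(c)\, a + \der(c)\bigr) + \der(b)\, c = \aut(b)\, \op{a}{c} + \der(b)\, c,
\end{equation}
by recognizing the parenthesized expression as $\op{a}{c}$ via the definition once more.

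There is no real obstacle here: the statement is a direct consequence of the definitions and amounts to a two-line calculation. The only thing to be slightly careful about is the non-commutativity inherent in the Leibniz rule for $\der$, namely that $\der(bc)$ splits as $\der(b) c + \aut(b) \der(c)$ rather than the classical $\der(b) c + b \der(c)$; this is precisely what produces the twist by $\aut(b)$ in the final identity.
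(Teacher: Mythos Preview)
Your proof is correct and follows exactly the same approach as the paper: unfold the definition of $\op{a}{\cdot}$, apply multiplicativity of $\aut$ and the $\aut$-Leibniz rule for $\der$, then regroup and recognize $\op{a}{c}$. The paper's argument is identical, just presented as a single chain of equalities.
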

	
	\begin{proof}
		The definition of $\op{a}{\cdot}$ and the product rule for derivations yield
	    \begin{align}
			\op{a}{bc} &= \aut(bc) a + \der(bc) = \aut(bc) a + \der(b)c + \aut(b)\der(c) \\
			&= \aut(b)(\aut(c)a + \der(c)) + \der(b)c = \aut(b) \op{a}{c} + \der(b)c.
		\end{align}
	\qed
	\end{proof}

	Let us now define the generalized operator evaluation of skew polynomials:

	\begin{definition}
		The \defemph{generalized operator evaluation} of a skew polynomial $f(x) = \sum_i f_i x^i \in \SkewPolyring$ at
		a point $b \in \Fqm$ with respect to an evaluation parameter $a \in \Fqm$ is given by
		\begin{equation}
			\opev{f}{b}{a} \defeq \sum_i f_i \opexp{a}{b}{i}.
		\end{equation}
	\end{definition}

	For a vector $\x \in \Fqm^{n}$, a vector $\a = (a_1, \ldots, a_\shots) \in \Fqm^{\shots}$, and a parameter
	$d \in \NN^{\ast}$, the \emph{generalized Moore matrix} $\opMoore{d}{\x}{\a}$ is defined as
	\begin{align}\label{eq:def_gen_moore_mat}
		\opMoore{d}{\x}{\a} &\defeq
		\left( \opVandermonde{d}{\x^{(1)}}{a_1}, \dots, \opVandermonde{d}{\x^{(\shots)}}{a_\shots} \right)
		\in \Fqm^{d \times n},
		\\
		\text{where }
		\opVandermonde{d}{\x^{(i)}}{a_i} &\defeq
		\begin{pmatrix}
			x^{(i)}_1 & \cdots & x^{(i)}_{\lenShot{i}}
			\\
			\op{a_i}{x^{(i)}_1} & \cdots & \op{a_i}{x^{(i)}_{\lenShot{i}}}
			\\
			\vdots & \ddots & \vdots
			\\
			\opexp{a_i}{x^{(i)}_1}{d-1} & \cdots & \opexp{a_i}{x^{(i)}_{\lenShot{i}}}{d-1}
		\end{pmatrix}
		\quad \text{for } 1 \leq i \leq \shots.
	\end{align}
	If $\a$ contains representatives of pairwise distinct nontrivial conjugacy classes of $\Fqm$ and $\rk_{q}\left(\x^{(i)}\right) =
	\lenShot{i}$ for all $1 \leq i \leq \shots$, we have by~\cite[Thm.~2]{Mar2018}
	and~\cite[Thm.~4.5]{lam1988vandermonde} that $\rk_{q^m}\left(\opMoore{d}{\x}{\a}\right) = \min(d, \len)$.

	\subsection{(Generalized) Linearized Reed--Solomon Codes}

	Let us recall the definition of \ac{LRS} codes that generalize both \ac{RS} and Gabidulin codes.
	\ac{LRS} codes are evaluation codes with respect to skew polynomials, which specialize to conventional and
	linearized polynomials in the Hamming- and the rank-metric setting, respectively.

	\begin{definition}[Linearized Reed--Solomon Codes~{\cite[Def.~31]{Mar2018}}]\label{def:lrs_code}
		Let	$\a = (a_1, \dots, a_{\shots}) \in \Fqm^{\shots}$ consist of representatives of distinct nontrivial
		conjugacy classes of $\Fqm$.
		Choose a vector $\vecbeta \in \Fqm^{n}$ whose blocks $\vecbeta^{(i)} =
		\left( \shot{\beta}{i}_1, \dots, \shot{\beta}{i}_{\lenShot{i}} \right)$ contain $\Fq$-linearly independent elements for all
		$i = 1, \dots, \shots$.
		Then, a \defemph{\acf{LRS} code} of length $\len$ and dimension $k$ is defined as
		\begin{equation}
			\linRS{\vecbeta, \a; \n, k} \defeq
			\left\{ \left( \shot{\c}{1}(f) \mid \dots \mid \shot{\c}{\shots}(f) \right):
			f \in \SkewPolyring_{<k} \right\}
			\subseteq \Fqm^{n}
		\end{equation}
		where $\shot{\c}{i}(f) \defeq
		\left( \opev{f}{\shot{\beta}{i}_1}{a_i}, \ldots, \opev{f}{\shot{\beta}{i}_{\lenShot{i}}}{a_i} \right)$.
	\end{definition}

	Note that \ac{LRS} codes reach the Singleton-like bound $d \leq n - k + 1$ from~\cite[Prop.~34]{Mar2018} with
	equality, where $d$ denotes the minimum sum-rank distance of the code.
	They are thus \emph{\ac{MSRD}} codes.

	The generalized Moore matrix $\opMoore{k}{\vecbeta}{\a}$ is a generator matrix of the code $\linRS{\vecbeta, \a; \n, k}$.
	Since a generator matrix of this form is desirable as it e.g.\ gives rise to known efficient decoding algorithms, we call it a
	\emph{canonical} generator matrix of $\linRS{\vecbeta, \a; \n, k}$.
	Note that the parameters $\vecbeta$ and $\a$ of a canonical generator matrix are in general not uniquely determined,
	and not even fixing a particular $\a$ ensures the uniqueness of $\vecbeta$.

	In the zero-derivation case, the dual of an \ac{LRS} code can be described as
	\begin{equation}\label{eq:lrs_dual_zero_der}
		\linRS{\vecbeta,\a;\n,k}^\perp = \linRS{\vecalpha,\autinv(\a);\n,n-k}_{\autinv},
	\end{equation}
	where the index $\autinv$ on the right-hand side stands for the fact that it is an \ac{LRS} code with respect to
	the inverse automorphism $\autinv$ (see~\cite{caruso2019residues,Caruso2022duals}).
	The vector $\vecalpha=(\vecalpha^{(1)}\mid\dots\mid\vecalpha^{(\shots)})\in\Fqm^n$ satisfies
	\begin{equation}\label{eq:syndrome_eqs_lrs}
    	\sum_{i=1}^{\shots}\sum_{j=1}^{n_i}\alpha_{j}^{(i)}\opexp{a_i}{\beta_{j}^{(i)}}{h-1}=0
    	\quad \text{for all } h=1,\dots,n-1
	\end{equation}
	and has sum-rank weight $\SumRankWeight(\vecalpha)=n$ according to~\cite[Thm.~4]{martinez2019reliable}.
	In particular, the dual of a zero-derivation \ac{LRS} code is again an \ac{LRS} code.
	When nonzero derivations are allowed, the duals of \ac{LRS} codes are linearized Goppa codes which are
	(noncanonically) isomorphic to \ac{LRS} codes~\cite{caruso2019residues,Caruso2022duals}.

	As the proof of~\autoref{thm:GLRS_semilinear_equivalence} shows, codes that are (semi)linearly
	equivalent to \ac{LRS} codes are not necessarily \ac{LRS} codes themselves.
	However, this is true for a more general code family that is obtained by allowing nonzero block multipliers.
	We define \ac{GLRS} codes as follows:

	\begin{definition}[Generalized Linearized Reed--Solomon Codes]
		Let $\mycode{C} \defeq \linRS{\vecbeta, \a; \n, k}$ be an \ac{LRS} code as in~\autoref{def:lrs_code}.
		Further, let $\v = \left( v_1, \dots, v_{\shots} \right) \in \Fqm^{\shots}$ be a vector of nonzero $\Fqm$-elements.
		We define the \defemph{\acl{GLRS} code} $\GLRS{\vecbeta, \a, \v; \n, k}$ as
		\begin{equation}
			\GLRS{\vecbeta, \a, \v; \n, k} \defeq
			\left\{ \left( v_1 \shot{\c}{1} \mid \dots \mid v_{\shots} \shot{\c}{\shots} \right):
			\c \in \mycode{C} \right\}
			\subseteq \Fqm^{n}.
		\end{equation}
	\end{definition}
	Remark that we recover \ac{LRS} codes from \ac{GLRS} codes for $\v$ being the all-one vector.
	Since multiplying blocks with different nonzero $\Fqm$-elements is a sum-rank isometry according
	to~\autoref{thm:sum-rank_isometries}, we obtain the following corollary:

	\begin{corollary}
	    The minimum sum-rank distance of the code $\GLRS{\vecbeta, \a, \v; \n, k}$ is $d = n - k + 1$.
		Therefore, \ac{GLRS} codes are \ac{MSRD}.
	\end{corollary}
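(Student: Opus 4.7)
The plan is to observe that the map sending $\c \in \linRS{\vecbeta,\a;\n,k}$ to $\left( v_1 \shot{\c}{1} \mid \dots \mid v_\shots \shot{\c}{\shots} \right)$ is precisely the action $\actLin(\iota, \c)$ for the tuple
\[
\iota = \bigl( (v_1, \dots, v_\shots),\; (\I_{\lenShot{1}}, \dots, \I_{\lenShot{\shots}}),\; \mathrm{id}\bigr) \in \linIsometries,
\]
where $\mathrm{id} \in \Sym_{\lambda(\n)}$ denotes the trivial permutation. Since each $v_i \in \Fqm^{\ast}$ and each identity block $\I_{\lenShot{i}}$ lies in $\GL(\lenShot{i}, \Fq)$, this tuple really is an element of $\linIsometries$, and \autoref{thm:sum-rank_isometries} then guarantees that it acts as an $\Fqm$-linear sum-rank isometry on $\Fqm^{n}$. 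By construction, $\GLRS{\vecbeta, \a, \v; \n, k}$ is exactly the image of $\linRS{\vecbeta, \a; \n, k}$ under this isometry.

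I would then invoke the defining property of sum-rank isometries: they preserve the sum-rank weight of every vector, hence the multiset of codeword weights and, in particular, the minimum sum-rank distance of any code they act on. Combining this with the \ac{MSRD} property of \ac{LRS} codes recalled right after \autoref{def:lrs_code}, I obtain
\[
d\bigl(\GLRS{\vecbeta, \a, \v; \n, k}\bigr) = d\bigl(\linRS{\vecbeta, \a; \n, k}\bigr) = n - k + 1.
\]
This coincides with the Singleton-like bound from~\cite[Prop.~34]{Mar2018}, so the code is \ac{MSRD}. There is no real obstacle in this argument; the only thing to verify carefully is that the block-wise multiplication by $\v$ matches the formula for $\actLin$ in \autoref{thm:sum-rank_isometries}, which is immediate once the matrices $\M_i$ are taken to be identities and $\pi$ is the identity permutation.
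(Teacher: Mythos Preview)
Your proposal is correct and follows essentially the same approach as the paper: the paper notes just before the corollary that multiplying blocks by nonzero $\Fqm$-elements is a sum-rank isometry by \autoref{thm:sum-rank_isometries}, and then the corollary follows immediately from the MSRD property of \ac{LRS} codes. You simply make this argument explicit by writing down the specific tuple $\iota \in \linIsometries$ realizing the block multiplication.
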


	The code $\GLRS{\vecbeta, \a, \v; \n, k}$ has a generator matrix of the form
	\begin{equation}\label{eq:GLRS_gen_mat}
		\G = \left( v_1 \opVandermonde{k}{\shot{\vecbeta}{1}}{a_1} \mid \dots \mid v_{\shots} \opVandermonde{k}{\shot{\vecbeta}{\shots}}{a_{\shots}} \right).
	\end{equation}
	Similar to the \ac{LRS} case, we call any generator matrix of this form a \emph{canonical} generator matrix of $\GLRS{\vecbeta, \a, \v; \n, k}$.
	Note that a canonical generator matrix of a \ac{GLRS} code depends not only
	on the parameters $\vecbeta$ and $\a$ but also on the block multipliers $\v$.

	\section{Problem Statement}\label{sec:problem_statement}

	The main problem we want to solve is distinguishing \ac{GLRS} codes, that were disguised by means of
	$\Fqm$-semilinear isometries, from random sum-rank-metric codes of the same length and dimension.
	Formally, we state this task as follows:

	\begin{problem}[Distinguishing \ac{GLRS} Codes up to Semilinear Equivalence]\label{prob:distinguish_semilinear_equivalence}
		Given a full-rank matrix $\M \in \Fqm^{k \times n}$, decide if there are parameters $\vecbeta \in \Fqm^{n}$,
		$\a \in \Fqm^{\shots}$, $\v \in \Fqm^{\shots}$, $\aut \in \Gal(\Fqm/\Fq)$, and $\der$ being a $\aut$-derivation,
		such that $\langle \M \rangle$ is semilinearly equivalent to $\GLRS{\vecbeta, \a, \v; \n, k}$.
	\end{problem}

	We now investigate how $\Fqm$-semilinear transformations affect \ac{GLRS} codes to get a better understanding of the
	problem.
	\autoref{thm:GLRS_semilinear_equivalence} shows that every semilinear isometry (cp.~\autoref{thm:sum-rank_isometries})
	transforms a \ac{GLRS} code into another \ac{GLRS} code with possibly different parameters:

	\begin{theorem}\label{thm:GLRS_semilinear_equivalence}
		Let $\mycode{C} = \GLRS{\vecbeta, \a, \v; \n, k}$ be a \ac{GLRS} code with respect to $\aut$ and $\der \defeq \der_{\gamma}$.
		Let further $\iota \in \linIsometries$ denote an
		$\Fqm$-linear isometry with $\iota = \left( (c_1, \dots, c_{\shots}), (\M_1,\dots, \M_{\shots}), \pi \right)$.
		Then, the linearly equivalent code $\hat{\mycode{C}} \defeq \actLin(\iota, \mycode{C})$ is also a \ac{GLRS} code with respect to $\aut$ and $\der$.
		Namely, $\hat{\mycode{C}} = \GLRS{\hat{\vecbeta}, \hat{\a}, \hat{\v}; \n, k}$ with
		$\hat{\vecbeta} = (\shot{\vecbeta}{\pi^{-1}(1)} \M_1 \mid \dots \mid \shot{\vecbeta}{\pi^{-1}(\shots)} \M_{\shots})$,
		$\hat{\a} = (a_{\pi^{-1}(1)}, \dots, a_{\pi^{-1}(\shots)})$, and
		$\hat{\v} = (c_1 v_{\pi^{-1}(1)}, \dots, c_{\shots} v_{\pi^{-1}(\shots)})$.

		For a semilinear isometry $(\iota, \theta) \in \semilinIsometries$ with $\iota$ as above and $\theta \in \Aut(\Fqm)$,
		the code $\actSemilin((\iota, \theta), \mycode{C})$ is  a \ac{GLRS} code with
		respect to the automorphism $\aut$ and the possibly different derivation $\der_{\theta(\gamma)} \defeq \theta(\gamma) (\Id - \aut)$.
		Its parameters are $\theta(\hat{\vecbeta})$, $\theta(\hat{\a})$, and $\theta(\hat{\v})$, where $\theta$ is
		applied elementwise to the vectors.
	\end{theorem}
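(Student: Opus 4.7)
The plan is to pick a generic codeword of $\mycode{C}$, push it through the isometry, and rearrange the result into a canonical \ac{GLRS} form whose parameters can be read off. A generic codeword has the shape $\c = (v_1 \shot{\c}{1}(f) \mid \dots \mid v_\shots \shot{\c}{\shots}(f))$ with $f \in \SkewPolyring$, $\deg f < k$, and $\shot{\c}{i}(f)$ the vector of generalized operator evaluations of $f$ at $\shot{\vecbeta}{i}$ with parameter $a_i$. Because $\pi \in \Sym_{\lambda(\n)}$ only permutes blocks of equal length, $\M_i \in \GL(n_i, \Fq)$ multiplies the $i$-th permuted block, and $\actLin(\iota, \c)$ has $i$-th block $c_i v_{\pi^{-1}(i)} \shot{\c}{\pi^{-1}(i)}(f) \M_i$.

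The key technical step is to commute the multiplication by $\M_i$ past the operator evaluation. Specialising \autoref{lem:opProductRule} to $b \in \Fq$ and using that $\der = \gamma(\Id - \aut)$ vanishes on the fixed field of $\aut$, we get $\op{a}{\lambda x} = \aut(\lambda)\op{a}{x} + \der(\lambda)x = \lambda \op{a}{x}$ for all $\lambda \in \Fq$, so $\op{a}{\cdot}$ is $\Fq$-linear; the same then holds for every power and hence for $\opev{f}{\cdot}{a}$. Consequently, $\shot{\c}{\pi^{-1}(i)}(f) \M_i$ coincides with the evaluation of $f$ at $\shot{\vecbeta}{\pi^{-1}(i)} \M_i$ using parameter $a_{\pi^{-1}(i)}$. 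Setting $\hat{\vecbeta}^{(i)} \defeq \shot{\vecbeta}{\pi^{-1}(i)} \M_i$, $\hat a_i \defeq a_{\pi^{-1}(i)}$, $\hat v_i \defeq c_i v_{\pi^{-1}(i)}$ identifies $\actLin(\iota, \c)$ as a codeword of $\GLRS{\hat{\vecbeta}, \hat{\a}, \hat{\v}; \n, k}$. I then check admissibility of the new parameters: the entries of $\hat{\vecbeta}^{(i)}$ remain $\Fq$-linearly independent because $\M_i$ is invertible over $\Fq$; $\hat{\a}$ is a permutation of $\a$, hence still consists of representatives of distinct nontrivial conjugacy classes; and every $\hat v_i$ is nonzero.

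For the semilinear part, I apply $\theta$ entrywise to the codeword already computed. The essential identity is that $\theta$ commutes with $\op{a}{\cdot}$ up to a twist of the parameters: expanding gives $\theta(\op{a}{b}) = \theta(\aut(b))\,\theta(a) + \theta(\der(b))$. Since $\Aut(\Fqm)$ is cyclic (generated by the Frobenius), $\theta$ and $\aut$ commute, so $\theta(\aut(b)) = \aut(\theta(b))$. Moreover, $\theta(\der(b)) = \theta(\gamma)(\theta(b) - \aut(\theta(b))) = \der_{\theta(\gamma)}(\theta(b))$, which shows that $\theta$ transports an operator evaluation with respect to $(\aut, \der)$ into one with respect to $(\aut, \der_{\theta(\gamma)})$. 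Iterating over the powers used in $\opev{f}{\cdot}{a}$ and collecting terms yields $\actSemilin((\iota, \theta), \c) \in \GLRS{\theta(\hat{\vecbeta}), \theta(\hat{\a}), \theta(\hat{\v}); \n, k}$ defined with respect to $\der_{\theta(\gamma)}$, completing the claim.

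The main obstacle is the interplay between $\theta$ and the skew structure: one must verify that $\theta$ commutes with $\aut$ and transforms $\der_\gamma$ into $\der_{\theta(\gamma)}$ so that the \ac{GLRS} structure is preserved. The remaining work, namely the block bookkeeping under $\pi$ and the admissibility checks on $\hat{\vecbeta}, \hat{\a}, \hat{\v}$, is routine once the $\Fq$-linearity of $\op{a}{\cdot}$ is in hand.
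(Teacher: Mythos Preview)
Your proposal is correct and follows essentially the same approach as the paper: both arguments hinge on the $\Fq$-linearity of $\op{a}{\cdot}$ to commute the $\M_i$'s past the operator evaluation, and on the identity $\theta(\opexp{a}{b}{j}) = \opfullparamexp{\theta(a)}{\theta(b)}{\aut,\der_{\theta(\gamma)}}{j}$ for the semilinear part. The only cosmetic differences are that the paper phrases things in terms of rows of the canonical generator matrix rather than a generic codeword, and that you add the admissibility checks on $\hat{\vecbeta}$, $\hat{\a}$, $\hat{\v}$ (which the paper leaves implicit); conversely, the paper packages the semilinear identity as an induction over $j$, whereas you argue directly via the commutativity of $\Aut(\Fqm)$, which is equivalent.
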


	\begin{proof}
		Let us use the shorthand notations $v_{\pi_i} \defeq v_{\pi^{-1}(i)}$, $a_{\pi_i} \defeq a_{\pi^{-1}(i)}$, and
		$\shot{\vecbeta}{\pi_i} \defeq \shot{\vecbeta}{\pi^{-1}(i)}$ throughout this proof.
		$\mycode{C}$ has a generator matrix of the form $\G \defeq \left( v_1 \opVandermonde{k}{\vecbeta^{(1)}}{a_1}, \dots,
		v_{\shots} \opVandermonde{k}{\vecbeta^{(\shots)}}{a_{\shots}} \right)$.
		If $\iota$ acts on the $j$-th row of $\G$ for $j \in \{1, \dots, k\}$, we obtain
		\begin{equation}\label{eq:linIsoOnMooreRow}
			\left( c_1 v_{\pi_1} \opexp{a_{\pi_1}}{\shot{\vecbeta}{\pi_1}}{j-1} \M_1 \mid \dots \mid
			c_{\shots} v_{\pi_{\shots}} \opexp{a_{\pi_{\shots}}}{\shot{\vecbeta}{\pi_{\shots}}}{j-1} \M_{\shots} \right).
		\end{equation}
		Since generalized operator evaluation is $\Fq$-linear, we get
		$\opexp{a_{\pi_{i}}}{\shot{\vecbeta}{\pi_{i}}}{j-1} \M_{i} = \opexp{a_{\pi_{i}}}{\shot{\vecbeta}{\pi_{i}} \M_{i}}{j-1}$
		for all $i = 1, \dots, \shots$
		and thus,~\eqref{eq:linIsoOnMooreRow} is exactly the $j$-th row of
		\begin{equation}
			\hat{\G} \defeq \left( c_1 v_{\pi_1} \opVandermonde{k}{\vecbeta^{(\pi_1)} \M_1}{a_{\pi_1}}, \dots,
			c_{\shots} v_{\pi_{\shots}} \opVandermonde{k}{\vecbeta^{(\pi_{\shots})} \M_{\shots}}{a_{\pi_{\shots}}} \right).
		\end{equation}
		As $\hat{\G}$ generates $\hat{\mycode{C}}$, this proves the first part of the theorem.
		The second one follows from the observation
		\begin{equation}\label{eq:semilinInductionStatement}
		    \theta(v \opexp{a}{\beta}{j-1}) = \theta(v) \opfullparamexp{\theta(a)}{\theta(\beta)}{\aut, \der_{\theta(\gamma)}}{j-1}
		\end{equation}
		for any $v, a, \beta \in \Fqm^{\ast}$ and $j \in \NN^{\ast}$ with
		$\opfullparam{\cdot}{\cdot}{\aut, \der_{\theta(\gamma)}}$ denoting the generalized operator evaluation with
		respect to the automorphism $\aut$ and the derivation $\der_{\theta(\gamma)} \defeq \theta(\gamma) (\Id - \aut)$.
		\eqref{eq:semilinInductionStatement} can be verified by induction over $j$.
	\qed
	\end{proof}

	In fact, this shows that \ac{GLRS} codes with respect to a fixed automorphism and a fixed derivation are closed
	under linear equivalence.
	If we allow different derivations for a fixed automorphism, \ac{GLRS} codes are even closed under semilinear
	equivalence.
	This means, intuitively speaking, that~\autoref{prob:distinguish_semilinear_equivalence} boils down to distinguishing \ac{GLRS}
	codes.
	We hence formulate and focus on~\autoref{prob:distinguish}:

	\begin{problem}[Distinguishing \ac{GLRS} Codes]\label{prob:distinguish}
	    Given a full-rank matrix $\M \in \Fqm^{k \times n}$, decide if there are parameters $\vecbeta \in \Fqm^{n}$,
		$\a \in \Fqm^{\shots}$, $\v \in \Fqm^{\shots}$, $\aut \in \Gal(\Fqm/\Fq)$, and $\der$ being a $\aut$-derivation,
		such that $\langle \M \rangle = \GLRS{\vecbeta, \a, \v; \n, k}$.
	\end{problem}

	Let us describe more precisely how the two above-defined problems are related in case we assume the knowledge of the automorphism $\aut$ and the derivation
	$\der \defeq \der_{\gamma}$.
	If we restrict ourselves to linear equivalence,~\autoref{prob:distinguish_semilinear_equivalence} is equivalent to~\autoref{prob:distinguish}
	since every code that is linearly equivalent to a \ac{GLRS} code with respect to $\aut$ and $\der$ is a \ac{GLRS} code
	with respect to the same automorphism and derivation.
	In the more general, semilinear setting, we can solve~\autoref{prob:distinguish_semilinear_equivalence} by solving
	multiple instances of~\autoref{prob:distinguish}.
	Namely, we have to consider~\autoref{prob:distinguish} for all derivations $\der_{\theta(\gamma)} \defeq \theta(\gamma)(\Id - \aut)$
	with $\theta \in \Aut(\Fqm)$ according to~\autoref{thm:GLRS_semilinear_equivalence}.
	As $\vert \Aut(\Fqm) \vert = sm$ for $s$ being the extension degree of $\Fq$ over its prime field, we obtain that~\autoref{prob:distinguish_semilinear_equivalence}
	is equivalent to $sm$ instances of~\autoref{prob:distinguish}.

	We present two polynomial-time distinguishers that partly solve~\autoref{prob:distinguish} when $\aut$ and
	$\der$ are known in~\autoref{sec:distinguishers}.
	However, the pure knowledge whether a matrix generates a \ac{GLRS} code or not does not yet break a hypothetical McEliece-like cryptosystem based on \ac{GLRS} codes.
	We rather wish to recover an efficient decoding algorithm for the publicly known code by e.g.\ finding a
	canonical generator matrix.
	Therefore, the following problem is of great interest:

	\begin{problem}[Recovering a Canonical GLRS Generator Matrix]\label{prob:recover}
	    Given an arbitrary generator matrix $\G \in \Fqm^{k \times n}$ of a \ac{GLRS} code $\mycode{C}$,
		find parameters $\vecbeta \in \Fqm^{n}$, $\a \in \Fqm^{\shots}$, $\v \in \Fqm^{\shots}$, $\aut \in \Gal(\Fqm/\Fq)$, and $\der$ being a $\aut$-derivation,
		such that $\left( v_1 \opVandermonde{k}{\shot{\vecbeta}{1}}{a_1} \mid \dots \mid v_{\shots}
		\opVandermonde{k}{\shot{\vecbeta}{\shots}}{a_{\shots}} \right)$
		is a canonical generator matrix of $\mycode{C}$.
	\end{problem}

	We study~\autoref{prob:recover} in~\autoref{sec:decoding} and show two techniques to partially solve it for \ac{GLRS} codes in the zero-derivation case
	with known automorphism $\aut$.

	\section{Distinguishers for \ac{GLRS} Codes}\label{sec:distinguishers}

	This section contains two approaches that solve~\autoref{prob:distinguish}, that is the task of distinguishing
	\ac{GLRS} codes from random codes, for many instances.
	In both cases, we assume the knowledge of the automorphism $\aut$ and the derivation $\der$ with respect to which
	the code should be distinguished.

	In~\autoref{sec:square_distinguisher}, we focus on a square-code distinguisher that is inspired by an \ac{RS}-code
	distinguisher.
	It works for \ac{GLRS} codes constructed by means of the identity automorphism and zero derivation.

	Afterwards, we present an Overbeck-like distinguisher inspired by the rank-metric case
	in~\autoref{sec:overbeck_distinguisher}.
	This approach can handle any valid combination of automorphism and derivation but requires the knowledge of the
	evaluation-parameter vector $\a$.
	Moreover, the Overbeck-type distinguisher cannot deal with block multipliers and is thus applicable
	to \ac{LRS} codes only.
	However, \ac{GLRS} codes can still be handled by applying the distinguisher at most $(q^m-1)^{\shots}$ times (see~\autoref{sec:overbeck_distinguisher}
	for more details).

	We experimentally verified all results presented in this section for different parameter sets with an implementation in~SageMath~\cite{sage}.

	\subsection{A Square-Code Distinguisher}\label{sec:square_distinguisher}

	The first polynomial-time attack on a McEliece/Niederreiter variant based on \ac{GRS} codes was proposed by \acl{SaS} in~\cite{sidelnikov1992insecurity}.
	The attack was later on refined by Wieschebrink to attack the improved Berger--Loidreau cryptosystem~\cite{wieschebrink2006attack}, which is based on \ac{GRS} subcodes.
	The approach from~\cite{wieschebrink2006attack} was further improved in~\cite{wieschebrink2010cryptanalysis} to work with smaller subcodes and thus to break the cryptosystem for most practical parameters.
	The attack in~\cite{wieschebrink2010cryptanalysis} is based on the properties of the elementwise product (or \emph{Schur-square}) of a code.
	For any vectors $\x, \y \in \Fqm^n$ we define the \defemph{elementwise product} (also referred to as \emph{Schur} or \emph{star} product) of $\x$ and $\y$ as
	\begin{equation*}
		\x \star \y \defeq (x_1 y_1, x_2 y_2. \dots, x_n y_n).
	\end{equation*}
	The \defemph{square-code} of an $\Fqm$-linear code $\mycode{C} \subseteq \Fqm^n$ is defined as
	\begin{equation*}
		\mycode{C} \star \mycode{C} \defeq \left\{ \c_1 \star \c_2: \c_1, \c_2 \in \mycode{C} \right\}.
	\end{equation*}

	The main observation for distinguishing a random linear code in $\Fqm^{n}$ from a \ac{GRS} code $\mycode{C}$ is that the squared \ac{GRS} code has dimension $\dim(\mycode{C} \star \mycode{C}) = \min(n, \, 2k-1)$, which is small compared to the expected dimension of a squared random linear code.
	Note that a similar technique was used for the power decoding of \ac{RS} codes beyond the unique-decoding radius (see~\cite[Lemma~1]{schmidt2006decoding}).
	
	We will now derive a similar distinguisher for \ac{GLRS} codes constructed from skew-polynomial rings with identity automorphism $\aut = \Id$. Observe that in this case the only possible derivation is the zero derivation.
	\autoref{lem:lrs_id_aut} provides some basic results required for deriving a square-code distinguisher for \ac{GLRS} codes:

	\begin{lemma}\label{lem:lrs_id_aut}
		For $\aut = \Id$, let $\mycode{C}=\GLRS{\vecbeta, \a, \v; \n, k}$ be a \ac{GLRS} code constructed by polynomials from $\SkewPolyringZeroDer_{<k}=\Polyring_{<k}$.
		Then we have that
		\begin{align}
			\mycode{C} = \bigl\{ &\bigl(f(a_1), \dots, f(a_1) \mid \dots \mid f(a_\shots), \dots, f(a_\shots)\bigr) \cdot \diag\bigl((v_1 \shot{\vecbeta}{1} \mid \dots \mid v_{\shots} \shot{\vecbeta}{\shots})\bigr)\\
			&: f \in \Polyring_{<k}\bigr\},
		\end{align}
		where $f(\cdot)$ denotes ordinary polynomial evaluation.
	\end{lemma}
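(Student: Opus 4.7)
The plan is to reduce generalized operator evaluation to ordinary polynomial evaluation in the special case $\aut=\Id$, $\der=0$, and then simply substitute into the definition of a \ac{GLRS} code.

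First I would observe that since $\aut=\Id$, the only admissible derivation is $\der=\gamma(\Id-\aut)=0$, so $\SkewPolyringZeroDer$ really does coincide with the commutative ring $\Polyring$. Next I would compute the operator $\op{a}{b}=\aut(b)a+\der(b)=ba$ and then prove by a quick induction on $i$ that $\opexp{a}{b}{i}=b\,a^{i}$; the base case $i=0$ gives $b$, and the step follows from $\op{a}{b\,a^{i-1}}=b\,a^{i-1}\cdot a=b\,a^{i}$. From this, for any $f(x)=\sum_i f_i x^i\in\Polyring_{<k}$, the generalized operator evaluation collapses to
\begin{equation}
    \opev{f}{b}{a}=\sum_i f_i\,\opexp{a}{b}{i}=\sum_i f_i\,b\,a^{i}=b\cdot f(a),
\end{equation}
where $f(a)$ on the right is the ordinary polynomial evaluation.

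Applying this pointwise within each block, the $i$-th codeword block of an \ac{LRS} codeword becomes
\begin{equation}
    \shot{\c}{i}(f)=\bigl(\shot{\beta}{i}_{1}f(a_i),\ldots,\shot{\beta}{i}_{\lenShot{i}}f(a_i)\bigr)=f(a_i)\cdot \shot{\vecbeta}{i},
\end{equation}
and after multiplying block $i$ by the block multiplier $v_i$ we obtain $v_i f(a_i)\,\shot{\vecbeta}{i}$. Concatenating the blocks, the full codeword is
\begin{equation}
    \bigl(v_1 f(a_1)\shot{\vecbeta}{1}\mid\dots\mid v_\shots f(a_\shots)\shot{\vecbeta}{\shots}\bigr),
\end{equation}
which I would then rewrite as the announced factorization by pulling the scalars $f(a_i)$ (repeated $n_i$ times in the $i$-th block) into a row vector and collecting the $v_i\shot{\beta}{i}_{j}$ into a diagonal matrix. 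This is exactly the claimed form.

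There is no real obstacle here: the content is the identification $\opexp{a}{b}{i}=b\,a^{i}$ in the identity-automorphism zero-derivation case, after which everything is a rewriting of~\autoref{def:lrs_code} and the canonical generator-matrix form in~\eqref{eq:GLRS_gen_mat}. The only mild care needed is to keep track of the block structure so that the diagonal matrix has the correct $n_i$-fold repetition pattern matching the repeated row entries $f(a_i)$.
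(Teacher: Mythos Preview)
Your proposal is correct and follows essentially the same route as the paper: both reduce the generalized operator evaluation to $\opev{f}{\beta_j^{(i)}}{a_i}=\beta_j^{(i)}f(a_i)$ and then read off the codeword description. The only cosmetic difference is that the paper invokes the zero-derivation identity $\opexp{a}{b}{l}=\aut^{l}(b)\,\genNorm{l}{a}$ (which for $\aut=\Id$ gives $b\,a^{l}$ immediately), whereas you re-derive $\opexp{a}{b}{i}=b\,a^{i}$ by a short induction; the remainder of the argument is identical.
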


	\begin{proof}
		Since $\aut$ is the identity automorphism, the generalized operator evaluation of $f \in \SkewPolyringZeroDer$ at an element $\beta_j^{(i)} \in \Fqm$ with respect to the evaluation parameter $a_i \in \Fqm$ is
		\begin{align*}
			\opev{f}{\beta_j^{(i)}}{a_i} 
			&= \sum_{l=0}^{k-1}f_l\opexp{a_i}{\beta_j^{(i)}}{l}
			= \sum_{l=0}^{k-1}f_l\aut^{l}(\beta_j^{(i)})\genNorm{l}{a_i}
			=\beta_j^{(i)} \sum_{l=0}^{k-1} f_l a_i^l
			=\beta_j^{(i)} f(a_i),
		\end{align*}
		where $f(\cdot)$ denotes the ordinary polynomial evaluation.
		Hence, any $\c \in \mycode{C}$ can be written as
		\begin{align*}
			\c &= (v_1 \opev{f}{\beta_1^{(1)}}{a_1}, \dots, v_1 \opev{f}{\beta_{n_1}^{(1)}}{a_1} \mid \dots \mid v_{\shots} \opev{f}{\beta_1^{(\shots)}}{a_\shots}, \dots, v_{\shots} \opev{f}{\beta_{n_\shots}^{(\shots)}}{a_\shots})
			\\ 
			&=(v_1 \beta_1^{(1)}f(a_1), \dots, v_1 \beta_{n_1}^{(1)}f(a_1) \mid \dots \mid v_{\shots} \beta_1^{(\shots)}f(a_\shots), \dots, v_{\shots} \beta_{n_\shots}^{(\shots)}f(a_\shots)).
		\end{align*}
	\qed
	\end{proof}

	This allows the derivation of~\autoref{lem:square_code_dim} which is a result about the dimension of the square code
	of \ac{GLRS} codes and, in contrast, of random linear codes.

	\begin{lemma}\label{lem:square_code_dim}
	\begin{enumerate}
	\item Let $\mycode{C}\subseteq \Fqm^n$ be a \ac{GLRS} code of dimension $k$ with respect to $\aut=\Id$. Then
	\begin{equation}
	    \dim(\mycode{C} \star \mycode{C}) =  \min(\ell, 2k-1).
	\end{equation}
	\item Let $\mycode{C}\subseteq \Fqm^n$ be a linear code of dimension $k$ that was chosen uniformly at random. Then
	\begin{equation}
	    \Pr\left( \dim(\mycode{C} \star \mycode{C})<\min\left(n, \frac{k(k+1)}{2}\right) \right) \xrightarrow{k\to \infty} 0,
	\end{equation}
	where $\Pr(\cdot)$ denotes the probability of the event in parentheses.
	\end{enumerate}
	\end{lemma}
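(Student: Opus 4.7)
The plan is to prove the two parts separately.

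For part~(i), I would apply Lemma~\ref{lem:lrs_id_aut} to rewrite every codeword of $\mycode{C}$ as $\u_f \star \w$, where
\[
\w \defeq \left( v_1 \shot{\vecbeta}{1} \mid \dots \mid v_\shots \shot{\vecbeta}{\shots} \right) \in \Fqm^n
\]
is a \emph{fixed} vector all of whose entries are nonzero, and
\[
\u_f \defeq \left( f(a_1), \dots, f(a_1) \mid \dots \mid f(a_\shots), \dots, f(a_\shots) \right)
\]
collects the $\shots$ evaluations of $f \in \Polyring_{<k}$, each repeated $n_i$ times within block $i$. For two codewords coming from $f_1, f_2 \in \Polyring_{<k}$ the entrywise product is then $\u_{f_1 f_2} \star (\w \star \w)$, and because the products $f_1 f_2$ already $\Fqm$-linearly span $\Polyring_{<2k-1}$ (they include all monomials $x^{i+j}$ for $0 \leq i, j \leq k-1$), the square-code $\mycode{C} \star \mycode{C}$ is precisely the image of $\Polyring_{<2k-1}$ under the $\Fqm$-linear map $h \mapsto \u_h \star (\w \star \w)$. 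Since $\w \star \w$ has no zero entries and the within-block repetitions do not change the $\Fqm$-dimension, this image is $\Fqm$-linearly isomorphic to the ordinary Reed--Solomon evaluation code $\{(h(a_1), \dots, h(a_\shots)) : h \in \Polyring_{<2k-1}\}$, whose dimension equals $\min(\shots, 2k-1)$ because the $a_i$ are pairwise distinct.

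For part~(ii), I would rely on the by-now-classical argument on squares of random linear codes: given any basis $\g_1, \dots, \g_k$ of $\mycode{C}$, the square $\mycode{C} \star \mycode{C}$ is spanned by the $\binom{k+1}{2} = k(k+1)/2$ vectors $\g_i \star \g_j$ with $1 \leq i \leq j \leq k$, so the upper bound $\min(n, k(k+1)/2)$ is automatic. The matching lower bound for a uniformly random $k$-dimensional subspace of $\Fqm^n$ with probability tending to $1$ as $k \to \infty$ follows from a union bound over the $\Fqm$-linear dependencies that could arise among these spanning vectors. I would cite the standard literature on squares of random codes rather than redo this estimate.

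The only nontrivial ingredient is part~(i), and the only subtlety there is that when $\aut = \Id$ the code $\mycode{C}$ really behaves like a Reed--Solomon code of length $\shots$ that has been ``stretched'' by $\w$ and by the within-block repetitions; for $\dim(\mycode{C}) = k$ to hold as stated one implicitly needs $k \leq \shots$, and under this assumption the computation above directly yields $\dim(\mycode{C} \star \mycode{C}) = \min(\shots, 2k-1)$. I do not anticipate any real obstacle beyond carefully tracking the factorization $\c = \u_f \star \w$ through all the $\Fqm$-linear combinations making up the square-code.
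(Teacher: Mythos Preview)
Your proposal is correct and follows essentially the same route as the paper: both use Lemma~\ref{lem:lrs_id_aut} to factor each codeword as a block-constant evaluation vector times the fixed diagonal vector $(v_i\beta_j^{(i)})$, observe that the Schur product of two such codewords involves $(f_1f_2)(a_i)$ and the squared diagonal, and then reduce the dimension computation to that of a (generalized) Reed--Solomon code of length $\shots$ at the distinct points $a_1,\dots,a_\shots$; for part~(ii) both appeal to the standard result on squares of random codes. Your write-up is in fact slightly more explicit than the paper's in justifying that the products $f_1f_2$ span all of $\Polyring_{<2k-1}$ and in flagging the implicit hypothesis $k\le\shots$ needed for $\dim(\mycode{C})=k$, but there is no genuine difference in strategy.
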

	
	\begin{proof}
	\begin{enumerate}
	\item Let $\c, \c'$ be two codewords from $\mycode{C} \star \mycode{C}$ constructed by the evaluation of the polynomials $f, g \in \SkewPolyringZeroDer$ having the maximal degree $\deg(f)=\deg(g)=k-1$.
	Then, by~\autoref{lem:lrs_id_aut}, we have that $\c \star \c'$ has the form
	\begin{align*}
		\c \star \c' = &\left((f \cdot g)(a_1), \dots, (f \cdot g)(a_1) \mid  \dots \mid (f \cdot g)(a_\shots), \dots, (f \cdot g)(a_\shots)\right)\\
		&\cdot \diag\left(\left(v_1^2 \left({\shot{\vecbeta}{1}}\right)^2 \mid \dots \mid v_{\shots}^2 \left({\shot{\vecbeta}{\shots}}\right)^2\right)\right),
	\end{align*}
	where the squaring of the blocks $\shot{\vecbeta}{i}$ for $i = 1, \dots, \shots$ is understood elementwise.
	Since $\a$ contains representatives of different conjugacy classes of $\Fqm$, the elements in $\a$ are pairwise distinct.
	Since $\vecbeta$ contains block-wise $\Fq$-linearly independent elements, all entries in $\vecbeta$ are nonzero.
	Together with the fact that $\v$ contains only nonzero elements this implies that the diagonal matrix has full rank $n$.
	Hence, by considering only the first column of each block, we get a~\ac{GRS} code of length $\shots$ and dimension $\deg(f \cdot g) + 1 = 2k - 1$.
	The size of the corresponding generator matrix is $(2k-1)\times \ell$,
	 which yields the statement.
	\item This follows directly from~\cite{wieschebrink2006attack}.
	\end{enumerate}
\qed
	\end{proof}

\autoref{thm:square_code_dist} summarizes the results for the Wieschebrink-like square-code distinguisher for \ac{GLRS} codes in the identity-automorphism case.

\begin{theorem}[Square-Code Distinguisher]\label{thm:square_code_dist}
	Let $2 < k\leq \frac{n}{2}$ and let $\aut$ be the identity automorphism.
	Given a generator matrix of a $k$-dimensional code in $\Fqm^n$, we can distinguish a \ac{GLRS} code from a random code with high probability\footnote{In fact, the distinguisher recognizes a
	\ac{GLRS} code with probability one. But, with a small probability, it might wrongly declare a non-\ac{GLRS} code to be a \ac{GLRS} code.} in $\OCompl{n^5}$ operations in $\Fqm$.
\end{theorem}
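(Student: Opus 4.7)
The plan is to turn \autoref{lem:square_code_dim} into an explicit test. Given a generator matrix $\G \in \Fqm^{k \times n}$ of $\langle \G \rangle$, I would compute a spanning set of the square code $\langle \G \rangle \star \langle \G \rangle$ by forming the $\binom{k+1}{2}$ pairwise elementwise products $\g_i \star \g_j$ for $1 \leq i \leq j \leq k$ of the rows of $\G$, and then recover its dimension $d$ via a single Gaussian elimination. The distinguisher outputs ``\ac{GLRS}'' precisely when $d \leq \min(\ell, 2k-1)$, and ``random'' otherwise.

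Correctness follows directly from the two parts of \autoref{lem:square_code_dim}. The first statement forces every \ac{GLRS} code to pass the test with probability one, which matches the footnote. The second statement handles the converse direction provided that the threshold $\min(\ell, 2k-1)$ is strictly smaller than $\min(n, k(k+1)/2)$. Under the hypotheses $2 < k \leq n/2$ this separation is immediate: $(k-1)(k-2)/2 > 0$ yields $2k-1 < k(k+1)/2$, and $2k \leq n$ yields $2k-1 < n$. Hence a uniformly random code fails the test with probability tending to one as $k$ grows.

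For the running time, the $\OCompl{k^2}$ pairwise star products cost $\OCompl{k^2 n}$ operations in $\Fqm$, and Gaussian elimination on the resulting matrix with $\OCompl{k^2}$ rows and $n$ columns costs $\OCompl{k^4 n}$ operations. The assumption $k \leq n/2$ then bounds the total by $\OCompl{n^5}$, matching the claim. I expect the algorithmic description and the complexity accounting to be routine; the only delicate point is the asymptotic nature of the ``high probability'' guarantee inherited from \autoref{lem:square_code_dim}, which is also the origin of the footnote's one-sided error.
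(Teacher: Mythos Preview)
Your proposal is correct and follows essentially the same approach as the paper: both reduce the distinguisher to computing the square-code dimension and invoking \autoref{lem:square_code_dim}. You are in fact more explicit than the paper, which simply cites \autoref{lem:square_code_dim} and the complexity analysis of Wieschebrink (arriving at $\OCompl{k^4 n + k^2 n + k^2(n-k)^2 n} \subseteq \OCompl{n^5}$), whereas you spell out the threshold test, verify that the hypotheses $2 < k \leq n/2$ guarantee the separation $\min(\ell, 2k-1) < \min(n, k(k+1)/2)$, and give your own self-contained complexity count.
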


\begin{proof}
	Using \autoref{lem:square_code_dim} we can distinguish a \ac{GLRS} code with high probability from a random linear code by considering the dimension of the square code.
	The complexity, which is in the order of
	\begin{equation*}
		\OCompl{k^4n + k^2n + k^2 (n-k)^2n} \subseteq \OCompl{n^5}
	\end{equation*}
	operations in $\Fqm$, follows from~\cite{wieschebrink2010cryptanalysis}.
\qed
\end{proof}

	\subsection{An Overbeck-like Distinguisher}\label{sec:overbeck_distinguisher}

	Overbeck proposed a distinguisher for Gabidulin codes in~\cite{Ove2005,Ove2007,Ove2007PhD}.
	The main idea is to repeatedly apply the Frobenius automorphism to the public generator matrix and stack the results
	vertically.
	Since there is a generator matrix of a Gabidulin code whose $i$-th row is the $(i-1)$-fold application of the Frobenius
	automorphism to a generating vector, the rank of the stacked matrix will only increase by one for each new matrix
	block.
	But random full-rank matrices behave differently and the stacked matrix has much higher rank in general.

	\acl{HaMaR}~\cite{horlemann2018extension} used a slightly different approach, which we will call \emph{\acs{HMR} approach}
	for short, to recover the secret parameters of a Gabidulin code.
	We mention their technique because it gives rise to a distinguisher and it is similar to Overbeck's approach,
	as it also makes use of the repeated application of the Frobenius automorphism to the public generator matrix.
	But instead of considering the sum of the corresponding codes, the \acs{HMR} approach focuses on the
	intersection of the codes and shows that its dimension only decreases by one for each iteration step.
	Again, random codes show a different behavior under this operation.

	We now present a generalization of Overbeck's approach to \ac{LRS} codes in the sum-rank metric.
	In contrast to the square-code distinguisher, the Overbeck-like distinguisher works for the general setting with an
	arbitrary automorphism $\aut$ and any valid $\aut$-derivation $\der$.
	Since it does not support block multipliers, i.e., \ac{GLRS} codes, let us quickly describe how we can apply
	distinguishers for \ac{LRS} codes to \ac{GLRS} codes in general.
	Recall therefore that a \ac{GLRS} code $\GLRS{\vecbeta, \a, \v; \n, k}$ has a generator matrix of the form
	$(v_1 \shot{\G}{1} \mid \dots \mid v_{\shots} \shot{\G}{\shots})$, where $\G \in \Fqm^{k \times n}$ is a generator
	matrix of $\linRS{\vecbeta, \a; \n, k}$.
	But this implies that the Overbeck-like distinguisher will (at least) succeed if we apply it to the matrix
	$(v_1^{-1} \shot{\M}{1} \mid \dots \mid v_{\shots}^{-1} \shot{\M}{\shots})$, where $\M \in \Fqm^{k \times n}$ denotes
	the public generator matrix of the \ac{GLRS} code.
	We can thus run the Overbeck-like distinguisher for different choices of $\v^{-1} \in \Fqm^{\shots}$ until it either
	succeeds or all possible $(q^m-1)^{\shots}$ (inverse) block multipliers were checked in the worst case.

	\begin{lemma}\label{lem:moore_matrix_properties}
	    Choose $k < n$, let the entries of $\a \in \Fqm^{\shots}$ belong to distinct nontrivial conjugacy classes of
		$\Fqm$ and let $\x \in \Fqm^{n}$ be a vector with $\SumRankWeight(\x) = n$.
		Then the following holds for the generalized Moore matrix $\opMoore{k}{\x}{\a}$:
		\begin{enumerate}
			\item The addition code $\mycode{A} \defeq \langle \opMoore{k}{\x}{\a} \rangle + \langle \op{\a}{\opMoore{k}{\x}{\a}} \rangle$
					equals $\langle \opMoore{k+1}{\x}{\a} \rangle$ and thus $\dim (\mycode{A}) = k + 1$.
		    \item The intersection code $\mycode{I} \defeq \langle \opMoore{k}{\x}{\a} \rangle \cap \langle \op{\a}{\opMoore{k}{\x}{\a}} \rangle$
					is generated by the matrix $\opMoore{k-1}{\op{\a}{\x}}{\a}$ and hence $\dim (\mycode{I}) = k - 1$.
		\end{enumerate}
	\end{lemma}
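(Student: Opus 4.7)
The plan is to identify how applying $\op{\a}{\cdot}$ to a generalized Moore matrix interacts with its row structure. By definition of $\opexp{a_i}{\cdot}{j}$, the $j$-th row of $\opMoore{k}{\x}{\a}$ (for $j=1,\dots,k$) is $(\opexp{a_1}{\shot{\x}{1}}{j-1} \mid \dots \mid \opexp{a_\shots}{\shot{\x}{\shots}}{j-1})$, and blockwise elementwise application of $\op{\a}{\cdot}$ increments each exponent by one. Consequently, the rows of $\op{\a}{\opMoore{k}{\x}{\a}}$ are precisely $\opexp{\a}{\x}{1},\dots,\opexp{\a}{\x}{k}$, and the union of the rowsets of $\opMoore{k}{\x}{\a}$ and $\op{\a}{\opMoore{k}{\x}{\a}}$ is exactly the rowset $\{\x,\op{\a}{\x},\dots,\opexp{\a}{\x}{k}\}$ of $\opMoore{k+1}{\x}{\a}$. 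This proves $\mycode{A}=\langle \opMoore{k+1}{\x}{\a}\rangle$. The hypotheses $k<n$, the distinct nontrivial conjugacy class condition on $\a$, and $\SumRankWeight(\x)=n$ are exactly those required to invoke the rank formula $\rk_{q^m}(\opMoore{d}{\x}{\a})=\min(d,n)$ of~\cite[Thm.~2]{Mar2018} and~\cite[Thm.~4.5]{lam1988vandermonde} with $d=k+1\leq n$, yielding $\dim(\mycode{A})=k+1$ and completing part~(1).

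For part~(2), the strategy is to combine Grassmann's dimension formula with an explicit identification of the intersection. Since the rowsets of $\opMoore{k}{\x}{\a}$ and of $\op{\a}{\opMoore{k}{\x}{\a}}$ each form a $k$-element subset of the $k+1$ linearly independent rows of $\opMoore{k+1}{\x}{\a}$, each of the codes $\langle \opMoore{k}{\x}{\a}\rangle$ and $\langle \op{\a}{\opMoore{k}{\x}{\a}}\rangle$ is $k$-dimensional, and Grassmann's formula gives $\dim(\mycode{I})=k+k-(k+1)=k-1$. For the explicit generator, I would observe that the $k-1$ rows $\opexp{\a}{\x}{1},\dots,\opexp{\a}{\x}{k-1}$ appear simultaneously as rows $2,\dots,k$ of $\opMoore{k}{\x}{\a}$ and as rows $1,\dots,k-1$ of $\op{\a}{\opMoore{k}{\x}{\a}}$, hence lie in $\mycode{I}$; they are also exactly the rows of $\opMoore{k-1}{\op{\a}{\x}}{\a}$, since $\opexp{\a}{\op{\a}{\x}}{i}=\opexp{\a}{\x}{i+1}$. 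Being a subset of the $k+1$ independent rows of $\opMoore{k+1}{\x}{\a}$, they remain linearly independent, so their span is a $(k-1)$-dimensional subspace of $\mycode{I}$ and, by the dimension count just obtained, equals $\mycode{I}$.

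The one substantive bookkeeping step is the shift-of-exponent identification $\op{\a}{\opMoore{k}{\x}{\a}}=\opMoore{k}{\op{\a}{\x}}{\a}$; once that picture is in place, both claims reduce to invoking the known rank formula for generalized Moore matrices together with Grassmann. A pleasant feature of this approach is that it never requires $\op{\a}{\x}$ itself to carry sum-rank weight $n$, which would otherwise demand a separate analysis of the $\Fq$-linear invertibility of $\op{a_i}{\cdot}$ in the presence of a nonzero derivation.
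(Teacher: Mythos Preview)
Your argument is correct and follows essentially the same route as the paper: identify the row overlap, conclude $\mycode{A}=\langle\opMoore{k+1}{\x}{\a}\rangle$, invoke the Moore rank formula, and then combine Grassmann's formula with the explicit $(k-1)$ shared rows for $\mycode{I}$. The one genuine difference is in how the $\Fqm$-independence of the rows $\opexp{\a}{\x}{1},\dots,\opexp{\a}{\x}{k-1}$ is established. The paper argues that $\SumRankWeight(\op{\a}{\x})=n$ by $\Fq$-linearity of $\op{a_i}{\cdot}$ and then reapplies the Moore rank formula to $\opMoore{k-1}{\op{\a}{\x}}{\a}$; this implicitly uses that each $\op{a_i}{\cdot}$ is an $\Fq$-linear \emph{bijection} on $\Fqm$ (which is where nontriviality of the conjugacy classes enters a second time). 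You instead observe that these $k-1$ rows are a subset of the already-proved independent rowset of $\opMoore{k+1}{\x}{\a}$, which is cleaner and, as you note, sidesteps any separate verification of invertibility in the general-derivation case. Both arguments are valid; yours has slightly less to check.
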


	\begin{proof}
		\begin{enumerate}
		    \item Let $\A \in \Fqm^{2k \times n}$ denote the matrix that is obtained by vertically stacking
					$\opMoore{k}{\x}{\a}$ and $\op{\a}{\opMoore{k}{\x}{\a}}$.
					Since the first $k-1$ lines of $\op{\a}{\opMoore{k}{\x}{\a}}$ coincide with the last $k-1$ rows of
					$\opMoore{k}{\x}{\a}$ due to the Moore-matrix structure, we obtain
					\begin{equation}
						\mycode{A} = \langle \A \rangle = \left\langle
						\begin{pmatrix}
							\opMoore{k}{\x}{\a} \\
							\opexp{\a}{\x}{k}
						\end{pmatrix}
						\right\rangle
						= \langle \opMoore{k+1}{\x}{\a} \rangle.
					\end{equation}
					As further $k+1 \leq n$ holds and the necessary conditions on $\a$ and $\x$ apply, we get
					$\dim (\mycode{A}) = \rk_{q^m}(\opMoore{k+1}{\x}{\a}) = \min (k + 1, n) = k + 1$.
			\item As the last $k-1$ lines of $\opMoore{k}{\x}{\a}$ and the first $k-1$ lines of
					$\op{\a}{\opMoore{k}{\x}{\a}}$ coincide, their span $\langle \opMoore{k-1}{\op{\a}{\x}}{\a} \rangle$
					is certainly contained in $\mycode{I}$.
					Note that, because of the $\Fq$-linearity of $\op{a}{\cdot}$,
					$\SumRankWeight(\op{\a}{\x}) = \SumRankWeight(\x) = n$ holds,
					which implies $\rk_{q^m}(\opMoore{k-1}{\op{\a}{\x}}{\a}) = k - 1$.
					Thus,
					\begin{align}
						\dim(\mycode{I}) &= \rk_{q^m}(\opMoore{k}{\x}{\a}) + \rk_{q^m}(\op{\a}{\opMoore{k}{\x}{\a}})
						- \dim(\mycode{A}) \\
						&= 2k - k - 1 = k - 1
					\end{align}
					and $\mycode{I} = \langle \opMoore{k-1}{\op{\a}{\x}}{\a} \rangle$ follows from the dimension equality.
		\end{enumerate}
	\qed
	\end{proof}

	Define the operator
	\begin{equation}\label{eq:def_full_operator}
		\Gamma_{\a}^j: \quad \Fqm^{k \times n} \to \Fqm^{(j+1)k \times n},
		\qquad \M \mapsto
		\begin{pmatrix}
			\M
			\\
			\op{\a}{\M}
			\\
			\vdots
			\\
			\opexp{\a}{\M}{j}
		\end{pmatrix}
	\end{equation}
	for a fixed vector $\a \in \Fqm^{\shots}$ of evaluation parameters and a natural number $j \in \NN$.

	\begin{corollary}\label{cor:overbeck_distinguisher}
		Let $\G$ be an arbitrary generator matrix of the code $\linRS{\vecbeta, \a; \n, k}$.
		Then, $\Gamma_{\a}^{j}(\G)$ generates the code $\linRS{\vecbeta, \a; \n, k + j}$ and $\rk_{q^m}(\Gamma_{\a}^{j}(\G)) = k + j$ holds
		for all $0 \leq j \leq n - k$.
	\end{corollary}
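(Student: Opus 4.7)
My plan is to establish the corollary by first reducing the claim to the canonical case $\G = \M \defeq \opMoore{k}{\vecbeta}{\a}$, and then iterating part~1 of \autoref{lem:moore_matrix_properties}.

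For the canonical case, the block-Moore structure ensures that the rows of $\op{\a}{\M}$ are precisely rows $2$ through $k+1$ of $\opMoore{k+1}{\vecbeta}{\a}$, since $\op{a_r}{\opexp{a_r}{\beta_j^{(r)}}{i-1}} = \opexp{a_r}{\beta_j^{(r)}}{i}$ holds blockwise. Hence $\langle \Gamma_{\a}^{1}(\M) \rangle = \langle \opMoore{k+1}{\vecbeta}{\a} \rangle$, and iterating this observation $j$ times yields
\begin{equation}
    \langle \Gamma_{\a}^{j}(\M) \rangle = \langle \opMoore{k+j}{\vecbeta}{\a} \rangle = \linRS{\vecbeta, \a; \n, k+j}.
\end{equation}
The rank property of generalized Moore matrices recalled after~\eqref{eq:def_gen_moore_mat} then gives $\dim \langle \Gamma_{\a}^{j}(\M) \rangle = \min(k+j, n) = k+j$ because $j \leq n-k$.

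For the reduction to the canonical case, write $\G = \U \M$ with invertible $\U \in \Fqm^{k \times k}$. Applying \autoref{lem:opProductRule} entrywise to $\op{\a}{\U \M}$ produces $\op{\a}{\G} = \aut(\U)\op{\a}{\M} + \der(\U)\M$, where $\aut$ and $\der$ act componentwise on $\U$. A straightforward induction on $l$, again using \autoref{lem:opProductRule}, then shows
\begin{equation}
    \opexp{\a}{\G}{l} = \aut^{l}(\U)\opexp{\a}{\M}{l} + \sum_{i=0}^{l-1} \W_{l,i}\, \opexp{\a}{\M}{i}
\end{equation}
for suitable matrices $\W_{l,i} \in \Fqm^{k \times k}$. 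Stacking these relations yields $\Gamma_{\a}^{j}(\G) = \T_j \cdot \Gamma_{\a}^{j}(\M)$ with $\T_j \in \Fqm^{(j+1)k \times (j+1)k}$ block lower-triangular and diagonal blocks $\U, \aut(\U), \ldots, \aut^{j}(\U)$. All diagonal blocks are invertible, hence $\T_j$ is invertible and $\langle \Gamma_{\a}^{j}(\G) \rangle = \langle \Gamma_{\a}^{j}(\M) \rangle$.

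Combining the two steps shows that $\Gamma_{\a}^{j}(\G)$ generates $\linRS{\vecbeta, \a; \n, k+j}$ and has rank exactly $k+j$. The main subtlety is that $\op{\a}{\cdot}$ is only $\Fq$-linear, with a Leibniz-type correction for $\Fqm$-scalars; this is what forces the lower-triangular (rather than diagonal) structure in the transformation from $\G$ to $\M$. Tracking the block-triangular shape and verifying that the diagonal blocks $\aut^{l}(\U)$ stay invertible under the iterated semilinear action is the technical heart of the argument.
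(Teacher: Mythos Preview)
Your proof is correct. Both you and the paper handle the canonical case $\G = \opMoore{k}{\vecbeta}{\a}$ by iterating \autoref{lem:moore_matrix_properties}, and both invoke \autoref{lem:opProductRule} to control the change of basis $\G = \U\M$. The difference is in how that second step is packaged. The paper argues by double inclusion for $j=1$: it expands each row of $\op{\a}{\G}$ as an $\Fqm$-combination of $\vecbeta,\ldots,\opexp{\a}{\vecbeta}{k}$ to get one containment, then locates a row whose coefficient on $\opexp{\a}{\vecbeta}{k}$ is $\aut(S_{l^\ast,k})\neq 0$ to get the other, and finally inducts on $j$. You instead derive the global identity $\Gamma_{\a}^{j}(\G) = \T_j\,\Gamma_{\a}^{j}(\M)$ with $\T_j$ block lower-triangular and invertible diagonal blocks $\aut^{l}(\U)$, so equality of row spaces is immediate. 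Your route is a bit more structural and sidesteps the explicit pivot search; the paper's route makes the one-dimensional jump at each step more visible. Both arguments are of comparable length and rely on the same ingredients.
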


	\begin{proof}
		If $\G = \opMoore{k}{\vecbeta}{\a}$, the statements follow from  an iterative application of \autoref{lem:moore_matrix_properties}.
	    In any other case, there is a matrix $\S = (S_{i, j})_{i, j} \in \GL_{k}(\Fqm)$ such that $\G = \S \cdot \opMoore{k}{\vecbeta}{\a}$.

		Let us first focus on the smallest nontrivial choice for $j$, namely $j = 1$.
		The $l$-th row of $\op{\a}{\G} = \op{\a}{\S \cdot \opMoore{k}{\vecbeta}{\a}}$ is
		\begin{align}
		    \opLargeParens{\a}{\sum_{i=1}^{k} S_{l,i} \opexp{\a}{\vecbeta}{i-1}}
			&= \sum_{i=1}^{k} \op{\a}{S_{l,i} \opexp{\a}{\vecbeta}{i-1}} \\
			&\overset{(\ast)}{=} \sum_{i=1}^{k} \aut(S_{l,i}) \opexp{\a}{\vecbeta}{i} + \der(S_{l,i}) \opexp{\a}{\vecbeta}{i-1},
			\label{eq:overbeck_row_representation}
		\end{align}
		where $(\ast)$ follows from~\autoref{lem:opProductRule}.
		But this is a $\Fqm$-linear combination of the elements $\vecbeta, \op{\a}{\vecbeta}, \dots, \opexp{\a}{\vecbeta}{k}$,
		i.e., of a basis of $\linRS{\vecbeta, \a; \n, k + 1}$.
		Hence, the inclusion $\langle \op{\a}{\G} \rangle \subseteq \linRS{\vecbeta, \a; \n, k + 1}$ applies.
		Since $\G$ generates $\linRS{\vecbeta, \a; \n, k} \subseteq \linRS{\vecbeta, \a; \n, k + 1}$, it follows further
		that $\langle \Gamma_{\a}(\G) \rangle \subseteq \linRS{\vecbeta, \a; \n, k + 1}$.

		Let us show the other inclusion $\langle \Gamma_{\a}(\G) \rangle \supseteq \linRS{\vecbeta, \a; \n, k + 1}$.
		First realize that $\langle \Gamma_{\a}(\G) \rangle \supseteq \langle \G \rangle = \linRS{\vecbeta, \a; \n, k}$
		and $\linRS{\vecbeta, \a; \n, k + 1} = \linRS{\vecbeta, \a; \n, k} + \langle \opexp{\a}{\vecbeta}{k} \rangle$ hold.
		It is thus enough to show that there is an element of $\langle \Gamma_{\a}(\G) \rangle$ whose $\Fqm$-linear
		combination contains a nonzero multiple of $\opexp{\a}{\vecbeta}{k}$.
		But since $\S$ has full rank, there is a nonzero entry in its $k$-th column, say $S_{l^{\ast}, k}$.
		Now~\eqref{eq:overbeck_row_representation} shows that the $l^{\ast}$-th row of $\op{\a}{\G}$ has the form
		\begin{equation}
		    \aut(S_{l^{\ast}, k}) \opexp{\a}{\vecbeta}{k} + \sum_{i=1}^{k-1} \aut(S_{l,i}) \opexp{\a}{\vecbeta}{i} + \der(S_{l,i}) \opexp{\a}{\vecbeta}{i-1},
		\end{equation}
		where the right-hand side is clearly contained in $\linRS{\vecbeta, \a; \n, k}$.
		As $\aut(S_{l^{\ast}, k})$ is nonzero if and only if $S_{l^{\ast}, k} \neq 0$, this shows
		$\langle \op{\a}{\G} \rangle \supseteq \langle \opexp{\a}{\vecbeta}{k} \rangle$ and hence
		$\langle \Gamma_{\a}(\G) \rangle \supseteq \linRS{\vecbeta, \a; \n, k + 1}$.

		Summing up, we obtain $\langle \Gamma_{\a}(\G) \rangle = \linRS{\vecbeta, \a; \n, k + 1}$, which directly implies
		$\rk_{q^m}(\Gamma_{\a}(\G)) = k + 1$.

		For $j > 1$, the results follow inductively from the fact that
		\begin{equation}
		    \langle \Gamma_{\a}^{j}(\G) \rangle = \left\langle
			\begin{pmatrix}
				\Gamma_{\a}^{j-1}(\G) \\
				\opexp{\a}{\G}{j}
			\end{pmatrix}
			\right\rangle \overset{(\circ)}{=} \langle \Gamma_{\a}(\Gamma_{\a}^{j-1}(\G)) \rangle,
		\end{equation}
		since all rows that are added in step $(\circ)$ are already contained in the row space of $\Gamma_{\a}^{j-1}(\G)$.
		The statements $\langle \Gamma_{\a}^{j}(\G) \rangle = \linRS{\vecbeta, \a; \n, k + j}$ and hence
		$\rk_{q^m}(\Gamma_{\a}^{j}(\G)) = k + j$ follow with the knowledge of
		$\Gamma_{\a}^{j-1}(\G) = \linRS{\vecbeta, \a; \n, k + j - 1}$ and the proof for $j = 1$.
	\qed
	\end{proof}

	In contrast, randomly chosen full-rank matrices over $\Fqm$ tend to behave quite differently when $\Gamma_{\a}$ is
	applied.
	This is analogous to~\cite[Assumption~2]{Ove2005}.

	\begin{conjecture}\label{conj:rank_under_full_operator}
		Let $\M \in \Fqm^{k \times n}$ be a randomly chosen matrix with full $\Fqm$-rank and such that each block
		$\shot{\M}{i}$ for $i = 1, \dots, \shots$ has full column rank over $\Fq$.
		Assume that $\a \in \Fqm^{\shots}$ consists of randomly chosen representatives of distinct nontrivial conjugacy
		classes of $\Fqm$ and fix a parameter $j \in \{1, \dots, \len - k\}$.
		Then, $\rk_{q^m}(\Gamma_{\a}^j(\M)) = \min((j+1)k, \len)$ holds with high probability.
	\end{conjecture}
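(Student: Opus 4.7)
\medskip
\noindent\textbf{Proof plan.}
The plan is to reduce the claim to a generic-position statement: rank deficiency in $\Gamma_{\a}^j(\M)$ is an algebraic condition on the entries of $\M$, so once a single witness matrix achieving the full target rank has been exhibited, a Schwartz--Zippel-style counting argument would force the rank-deficient locus to be negligibly small. Since $\op{a_i}{b} = \aut(b)\,a_i + \der(b)$ with $\aut$ a power of the Frobenius automorphism, iterating the operator $r$ times produces an $\Fq$-linear combination of $\aut^{0}(b), \dots, \aut^{r}(b)$ whose coefficients depend only on $a_i$ and $\gamma$; because each Frobenius power of an entry of $\M$ is itself a monomial in those entries, every $N\times N$ minor of $\Gamma_{\a}^j(\M)$, with $N \defeq \min((j+1)k,\len)$, becomes a polynomial in the $kn$ entries of $\M$. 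The set $\mathcal{B}\subseteq\Fqm^{k\times n}$ of matrices on which all such minors vanish is therefore Zariski-closed.

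Assuming $\mathcal{B}$ is a \emph{proper} subvariety, a Schwartz--Zippel bound would imply that the probability of a uniformly random full-rank $\M$ lying in $\mathcal{B}$ decays as $\OCompl{q^{-m}}$, which yields the conjectured high-probability statement. The additional constraint that each block $\shot{\M}{i}$ have full column rank over $\Fq$ cuts out a further Zariski-open, dense subset of $\Fqm^{k\times n}$, so it does not affect the asymptotic estimate. The side condition that the $a_i$ be chosen from pairwise distinct nontrivial conjugacy classes can be handled similarly, since the complement of this condition is also closed.

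The main obstacle --- and the reason the statement is only formulated as a conjecture --- is establishing that $\mathcal{B}$ is indeed a proper subvariety, that is, producing at least one matrix $\M$ with $\rk_{q^m}(\Gamma_{\a}^j(\M)) = N$. The natural candidate $\M = \opMoore{k}{\vecbeta}{\a}$ fails by \autoref{lem:moore_matrix_properties}: the resulting stack has rank $k+j$, which is exactly the structural feature making \ac{LRS} codes distinguishable in the first place. To build a witness I would proceed row by row, inductively choosing each new row so that it escapes the span of all previously chosen rows together with their $\opexp{\a}{\cdot}{r}$-shifts for $r \leq j$. Within a single block this amounts to showing the independence of Frobenius-twisted generic $\Fqm$-subspaces, which is essentially Overbeck's unproven assumption in the rank-metric setting; across blocks one would exploit the fact that distinct conjugacy-class representatives $a_1, \dots, a_{\shots}$ make the operators $\op{a_i}{\cdot}$ act nearly independently on different block positions. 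Since the single-block version is itself a long-standing open assumption, I do not expect a short unconditional proof; a realistic justification would combine the algebraic framework above with the experimental verification that the authors already report.
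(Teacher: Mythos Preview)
The paper does not prove this statement at all: it is explicitly labeled a \emph{conjecture} and is introduced with the remark that it ``is analogous to~[Assumption~2]'' in Overbeck's work, which is itself an unproven heuristic in the rank-metric setting. There is therefore no proof in the paper for your proposal to be compared against.

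Your write-up is not a proof either, and you are candid about this. What you have given is a plausibility argument together with a correct diagnosis of the obstruction: the rank-deficient locus is Zariski-closed in the entries of $\M$ (once Frobenius powers are treated as monomials over the prime field), so a Schwartz--Zippel bound would apply \emph{provided} one can exhibit a single witness matrix attaining rank $\min((j{+}1)k,\len)$. You then correctly note that the natural candidate $\opMoore{k}{\vecbeta}{\a}$ fails precisely because of \autoref{lem:moore_matrix_properties}, and that constructing a witness by a greedy row-by-row argument reduces, in the single-block case, to the very assumption Overbeck left open. This is an accurate account of why the statement is stated as a conjecture rather than a theorem, and it matches the paper's implicit stance (the authors rely on experimental verification, as you also suggest). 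One minor caveat: your Schwartz--Zippel step glosses over the fact that Frobenius powers are $\Fp$-polynomial but not $\Fqm$-polynomial in the matrix entries, so the variety should be taken over the prime field and the degree bound adjusted accordingly; this does not change the qualitative conclusion but would need care in a rigorous treatment.
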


	With these results, we can solve~\autoref{prob:distinguish} for \ac{LRS} codes in polynomial time if $\aut$, $\der$,
	and $\a$ are known.
	We summarize it in~\autoref{thm:overbeck_dist}:

	\begin{theorem}[Overbeck-like Distinguisher]\label{thm:overbeck_dist}
		Let $\M \in \Fqm^{k \times n}$ be an arbitrary full-rank matrix.
		We can decide with high probability\footnote{In fact, the distinguisher recognizes a
		\ac{GLRS} code with probability one. But, with a small probability, it might wrongly declare a non-\ac{GLRS} code to be a \ac{GLRS} code.}
		if $\M$ generates an \ac{LRS} code with respect to $\aut$, $\der$, and $\a$
		in $\OCompl{n^5}$ operations in $\Fqm$.
	\end{theorem}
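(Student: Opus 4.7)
The plan is to exhibit a concrete distinguisher built from the operator $\Gamma_{\a}^{j}$ of~\eqref{eq:def_full_operator} and to show that its correctness follows almost immediately from \autoref{cor:overbeck_distinguisher} combined with \autoref{conj:rank_under_full_operator}.

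First I would fix an exponent $j \in \{1, \dots, n - k - 1\}$ (any such choice works; a maximally informative one is $j = n - k - 1$) and define the distinguisher to output \emph{``LRS''} precisely when $\rk_{q^m}(\Gamma_{\a}^{j}(\M)) = k + j$. For correctness, the two cases separate cleanly. If $\langle \M \rangle = \linRS{\vecbeta, \a; \n, k}$ for some valid vector $\vecbeta$, \autoref{cor:overbeck_distinguisher} directly gives $\rk_{q^m}(\Gamma_{\a}^{j}(\M)) = k + j$, so the distinguisher is correct with probability one on genuine LRS inputs. If $\M$ is instead a generic full-rank matrix of the required shape, \autoref{conj:rank_under_full_operator} yields $\rk_{q^m}(\Gamma_{\a}^{j}(\M)) = \min((j+1)k, n)$ with high probability, and the identity $(j+1)k - (k+j) = j(k-1) \geq 1$ for $k \geq 2$ ensures this value strictly exceeds the LRS target $k + j$, producing the rank gap needed to distinguish.

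For the complexity estimate I would build $\Gamma_{\a}^{j}(\M)$ iteratively: at each of the at most $n - k$ steps one applies $\op{\a}{\cdot}$ block-by-block to a $k \times n$ matrix, costing $\OCompl{kn}$ operations in $\Fqm$, and stacks the result onto the already constructed matrix. This produces a matrix of size at most $((n-k+1)k) \times n$, i.e., at most $\OCompl{n^2}$ rows by $n$ columns; computing its $\Fqm$-rank via Gaussian elimination then dominates the running time at $\OCompl{n^5}$ operations in $\Fqm$.

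The main obstacle is that the random-case behaviour of $\Gamma_{\a}^{j}$ relies on \autoref{conj:rank_under_full_operator}, which --- following Overbeck's original assumption for Gabidulin codes --- is stated only as a conjecture. Consequently the distinguisher cannot claim a rigorous probability bound on false positives and only inherits the informal ``with high probability'' guarantee of that conjecture; detection of genuine LRS codes, in contrast, is deterministic by \autoref{cor:overbeck_distinguisher}, which is exactly the asymmetry highlighted by the footnote of the theorem.
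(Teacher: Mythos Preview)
Your proposal is correct and follows essentially the same approach as the paper: choose an exponent $j$ so that the LRS rank $k+j$ is strictly below the generic rank $\min((j+1)k,n)$, build $\Gamma_{\a}^{j}(\M)$, and compare its $\Fqm$-rank against $k+j$, invoking \autoref{cor:overbeck_distinguisher} for the LRS case and \autoref{conj:rank_under_full_operator} for the random case. One tiny point to make explicit: your gap argument only treats the branch $\min((j+1)k,n)=(j+1)k$, but your range $j\le n-k-1$ already forces $k+j<n$, so the other branch is covered as well.
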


	\begin{proof}
		First, choose a $0 \leq j \leq n - k$ for which $k + j < \min((j+1)k, n)$ holds.
		We set up the matrix $\Gamma_{\a}^{j}(\M) \in \Fqm^{(j+1)k \times n}$ in $\OCompl{jkn} \subseteq \OCompl{n^3}$
		operations in $\Fqm$.
		Next, we compute its rank in $\OCompl{n^5}$ $\Fqm$-operations.
		By~\autoref{cor:overbeck_distinguisher} and~\autoref{conj:rank_under_full_operator}, we know with high
		probability that $\M$ generates an \ac{LRS} code with respect to the given parameters if
		$\rk_{q^m}(\Gamma_{\a}^{j}(\M)) = k + j$ holds.
		If however $\rk_{q^m}(\Gamma_{\a}^{j}(\M)) > k + j$, we know for sure that $\M$ is no generator matrix of an
		\ac{LRS} code with respect to the given parameters.
	\qed
	\end{proof}

	\begin{remark}
	    We empirically verified by simulations that the distinguisher can in most cases \emph{not} recognize an \ac{LRS} code if it is executed with respect to a
		different set of evaluation parameters.
		This is the case even if the conjugacy classes of the evaluation parameters $\a = (a_1, \dots, a_{\shots})$
		are known and only other representatives $\hat{\a} \defeq (\conj{a_1}{c_1}, \dots, \conj{a_{\shots}}{c_{\shots}})$
		with $c_1, \dots, c_{\shots} \in \Fqm^{\ast}$ are used for the distinguisher.

		This means that not even side information about the chosen conjugacy classes helps the distinguishing process
		but knowledge of the exact evaluation parameters is needed.
		If we do not have access to this information, we have to try exponentially many possibilities in the worst case.
	\end{remark}

	We use the remainder of this section to give a short outline of how the distinguisher using the \acs{HMR} approach
	can be generalized to the \ac{LRS} case.
	If a full-rank matrix $\M \in \Fqm^{k \times n}$ and parameters $\aut$, $\der$, and $\a$ are given, we focus on the
	intersection code instead of considering the addition code as we indicated earlier.
	Similar to~\autoref{cor:overbeck_distinguisher}, we can derive~\autoref{cor:overbeck_intersection_distinguisher}
	whose proof we omit for brevity.

	\begin{corollary}\label{cor:overbeck_intersection_distinguisher}
		Let $\G$ be an arbitrary generator matrix of the code $\linRS{\vecbeta, \a; \n, k}$.
		Then, the $j$-fold intersection code $\bigcap_{i=0}^{j} \langle \opexp{\a}{\M}{i} \rangle$ equals
		$\linRS{\opexp{\a}{\vecbeta}{j}, \a; \n, k - j}$ and has thus $\Fqm$-dimension $k - j$ for all $0 \leq j \leq k - 1$.
	\end{corollary}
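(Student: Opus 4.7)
The plan is to parallel the proof of \autoref{cor:overbeck_distinguisher}, using part~(2) of \autoref{lem:moore_matrix_properties} (intersection) in place of part~(1) (sum) as the one-step ingredient, and to induct on $j$. The base case $j=0$ is immediate, since the single-fold intersection is just $\langle \G \rangle = \linRS{\vecbeta, \a; \n, k} = \linRS{\opexp{\a}{\vecbeta}{0}, \a; \n, k}$.

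For the canonical choice $\G = \opMoore{k}{\vecbeta}{\a}$, the Moore structure gives $\opexp{\a}{\opMoore{k}{\vecbeta}{\a}}{i} = \opMoore{k}{\opexp{\a}{\vecbeta}{i}}{\a}$, and hence $\langle \opexp{\a}{\G}{i} \rangle = \linRS{\opexp{\a}{\vecbeta}{i}, \a; \n, k}$ is spanned by the consecutive vectors $\opexp{\a}{\vecbeta}{i}, \ldots, \opexp{\a}{\vecbeta}{i+k-1}$. By \autoref{cor:overbeck_distinguisher} the full sequence $\opexp{\a}{\vecbeta}{0}, \ldots, \opexp{\a}{\vecbeta}{k+j-1}$ is an $\Fqm$-basis of $\linRS{\vecbeta, \a; \n, k+j}$, so intersecting the $j+1$ shifted windows for $i=0,\ldots,j$ leaves only the common basis vectors $\opexp{\a}{\vecbeta}{j}, \ldots, \opexp{\a}{\vecbeta}{k-1}$, which span exactly $\linRS{\opexp{\a}{\vecbeta}{j}, \a; \n, k-j}$.

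To extend to an arbitrary generator $\G = \S\,\opMoore{k}{\vecbeta}{\a}$ with $\S \in \GL_{k}(\Fqm)$, the same mechanism as in the proof of \autoref{cor:overbeck_distinguisher} applies: iterating \autoref{lem:opProductRule} expresses each row of $\opexp{\a}{\G}{i}$ as an $\Fqm$-linear combination of $\vecbeta, \op{\a}{\vecbeta}, \ldots, \opexp{\a}{\vecbeta}{k+i-1}$, so $\langle \opexp{\a}{\G}{i} \rangle$ is a $k$-dimensional subspace of $\linRS{\vecbeta, \a; \n, k+i}$. Combining this with \autoref{cor:overbeck_distinguisher} and repeated use of the dimension formula $\dim(V \cap W) = \dim V + \dim W - \dim(V+W)$ yields $\dim \bigcap_{i=0}^{j} \langle \opexp{\a}{\G}{i} \rangle = k - j$. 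To upgrade this dimension statement to the claimed equality of codes, it then suffices to verify the inclusion $\linRS{\opexp{\a}{\vecbeta}{j}, \a; \n, k-j} \subseteq \langle \opexp{\a}{\G}{i} \rangle$ for every $0 \leq i \leq j$ and invoke the matching dimensions.

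The principal obstacle is precisely this inclusion for $i \geq 1$ when $\G$ is non-canonical: expanding $\opexp{\a}{\G}{i}$ via \autoref{lem:opProductRule} produces cross-terms of the form $\der(S_{l,\cdot})\,\opexp{\a}{\vecbeta}{<i}$ that in general tilt the $k$-dimensional subspace $\langle \opexp{\a}{\G}{i} \rangle$ away from the cleanly nested windows seen in the canonical case. The cleanest route around this is to argue that the $j$-fold intersection is an invariant of the code $\langle \G \rangle$ rather than of the specific matrix $\G$---any change of basis $\G \mapsto \T\G$ moves rows of $\opexp{\a}{\G}{i}$ only inside $\linRS{\vecbeta, \a; \n, k+i}$ by \autoref{lem:opProductRule}, and the dimension rigidity granted by \autoref{cor:overbeck_distinguisher} then pins the $(k-j)$-dimensional intersection to the canonical one. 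In the zero-derivation case the cross-terms vanish and the argument collapses to a transparent $\aut$-semilinear computation; carrying the invariance through in the general case is the real technical content that \autoref{cor:overbeck_distinguisher}'s sum-code version did not have to face.
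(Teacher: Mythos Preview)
Your plan to parallel the proof of \autoref{cor:overbeck_distinguisher} is precisely what the paper indicates (it omits the proof with the remark that it is derived similarly), and your treatment of the canonical generator matrix is correct. In the zero-derivation case your argument for arbitrary $\G$ also goes through, since then $\op{\a}{\S\G_0}=\aut(\S)\,\op{\a}{\G_0}$ with $\aut(\S)$ invertible, so each $\langle\opexp{\a}{\G}{i}\rangle=\langle\opexp{\a}{\G_0}{i}\rangle$ is genuinely a code invariant and the sliding-window computation carries over verbatim.

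The gap you sense in the general-derivation case, however, is not a missing trick but an actual obstruction: the invariance you hope for fails, and so does the statement. For a concrete counterexample with $k=2$, take $\G=\S\,\opMoore{2}{\vecbeta}{\a}$ with $\S=\begin{psmallmatrix}0&1\\1&a\end{psmallmatrix}\in\GL_2(\Fqm)$, whose rows correspond to the polynomials $f_1=x$ and $f_2=1+ax$. Then $xf_1=x^2$ while $xf_2=(1+\der(a))x+\aut(a)x^2$; choosing $a$ with $\der(a)=-1$ (which is possible for $\der=\gamma(\Id-\aut)$ whenever $\Tr(\gamma^{-1})=0$, by additive Hilbert~90) makes these two $\Fqm$-proportional, so $\langle\op{\a}{\G}\rangle=\langle\opexp{\a}{\vecbeta}{2}\rangle$ is only one-dimensional. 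Since $\langle\G\rangle+\langle\op{\a}{\G}\rangle$ still has dimension $k+1=3$ by \autoref{cor:overbeck_distinguisher}, the dimension formula forces $\langle\G\rangle\cap\langle\op{\a}{\G}\rangle=\{0\}$, not the claimed $(k-1)$-dimensional code $\linRS{\op{\a}{\vecbeta},\a;\n,1}$. The corollary should therefore be read under the zero-derivation hypothesis---which is indeed the only setting in which the paper actually invokes it, in \autoref{sec:decoding}---and under that hypothesis your proof is complete.
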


	Heuristically speaking, the application of $\op{\a}{\cdot}$ to a random full-rank matrix produces another
	essentially random code.
	For small dimension $k$, it is hence reasonable to assume that the $j$-fold intersection code
	from~\autoref{cor:overbeck_intersection_distinguisher} has a much lower dimension.
	This illustrates why~\autoref{cor:overbeck_intersection_distinguisher} can serve as a
	distinguisher for \ac{LRS} codes that can, of course, also be applied to \ac{GLRS} codes as explained in the
	beginning of this section.

	\section{Recovery of a Canonical Generator Matrix}\label{sec:decoding}

	If only a scrambled and possibly further disguised generator matrix of a \ac{GLRS} code is known, it is a crucial task
	to recover a canonical generator matrix of the same code.
	The secret code structure, that is revealed by a canonical generator matrix, is (up to now) directly linked to the
	knowledge of efficient decoding algorithms.
	We partly tackle~\autoref{prob:recover} in this section and show how the recovery can be done in the case of \ac{GLRS} codes with zero derivation for which
	the automorphism $\aut$ is given.
	As for the distinguishers, the following results were also implemented in SageMath and checked for several parameter sets.

	The first approach requires the identity automorphism and finds suitable evaluation parameters $\a$ and block multipliers $\v$, whereas
	the second one assumes the knowledge of $\a$ and allows to recover $\vecbeta$ for an arbitrary but known automorphism.
	If \ac{GLRS} codes with respect to the identity automorphism are considered, we can thus combine the two distinguishers to recover first $\a$ and $\v$, and then
	$\vecbeta$.

	\subsection{Square-Code Approach}
	
	For this approach, we focus on the identity automorphism which allows zero derivation only.
	The recovery strategy is based on the fact that we can extract a \ac{GRS} code from an arbitrary generator matrix of
	a \ac{GLRS} code as described in~\autoref{sec:square_distinguisher}.
	We then recover the parameters of the \ac{GRS} code and afterwards the ones of the \ac{GLRS} code.

	\begin{theorem}\label{thm:grs_recovery}
		Let $\G \in \Fqm^{k \times n}$ denote a generator matrix of a \ac{GLRS} code $\mycode{C}$ with respect to the
		identity automorphism and zero derivation.
		We can recover parameters $\a, \v \in \Fqm^{\shots}$ for which a canonical generator matrix of $\mycode{C}$ exists in $\OCompl{k^2 n}$ operations in $\Fqm$.
	\end{theorem}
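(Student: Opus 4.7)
The plan is to reduce the problem to the recovery of a classical GRS code via the Sidelnikov--Shestakov algorithm and then to lift the recovered parameters back to the GLRS setting. The enabling fact is \autoref{lem:lrs_id_aut}: with $\aut = \Id$ and zero derivation, the $i$-th block of every codeword in $\mycode{C}$ is an $\Fqm$-scalar multiple of the single vector $\u_i \defeq v_i \shot{\vecbeta}{i}$. Consequently, each block projection of $\mycode{C}$ is one-dimensional over $\Fqm$, every block $\shot{\G}{i}$ of the given matrix has $\Fqm$-rank one, and every entry of $\u_i$ is nonzero (since $\Fq$-linear independence of $\shot{\vecbeta}{i}$ forces each $\beta_j^{(i)}$ to be nonzero, and $v_i\neq 0$).

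First I would extract from each $\shot{\G}{i}$ a vector $\u_i$ spanning its $\Fqm$-row space, for instance by taking any nonzero row of $\shot{\G}{i}$, at a total cost of $\OCompl{kn}$ operations. Next I would form the auxiliary matrix $\G' \in \Fqm^{k \times \shots}$ whose $(j,i)$-entry is the first entry of the $i$-th block of the $j$-th row of $\G$. If the $j$-th row of $\G$ corresponds to $g_j \in \Polyring_{<k}$, then this entry equals $g_j(a_i) u_{i,1}$, where $u_{i,1}$ is the first entry of $\u_i$. Hence $\langle \G' \rangle = \RS{\a, (u_{1,1}, \dots, u_{\shots,1})}$ is a $k$-dimensional classical GRS code over $\Fqm$, and I would run the Sidelnikov--Shestakov attack on $\G'$ in polynomial time to obtain $\hat{\a}, \hat{\v} \in \Fqm^{\shots}$ with $\langle \G' \rangle = \RS{\hat{\a}, \hat{\v}}$.

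Finally I would return $\a \defeq \hat{\a}$ and $\v \defeq \hat{\v}$, and certify compatibility by exhibiting a matching $\vecbeta$. Setting $\shot{\vecbeta}{i} \defeq \u_i / u_{i,1}$ gives a vector whose blocks have $\Fq$-linearly independent entries (scaling by the nonzero element $u_{i,1}^{-1}$ preserves such independence), and then $\GLRS{\vecbeta, \a, \v; \n, k}$ has $i$-th block projection equal to $\Fqm \u_i$ and, upon restricting to the first coordinate of each block, generates exactly $\langle \G' \rangle$. Both features coincide with those of $\mycode{C}$, and because each block of the candidate code is one-dimensional, the first-coordinate projection determines the entire code; hence the two codes must be equal.

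The main obstacle will be this last compatibility argument, because Sidelnikov--Shestakov only pins down $(\hat{\a}, \hat{\v})$ up to the standard projective equivalences of GRS codes. The lift above resolves this because the block-wise scaling freedom of $\vecbeta$ exactly absorbs the column-multiplier ambiguity of the recovered GRS parameters, while any admissible reparameterization of $\hat{\a}$ is matched by a corresponding change of $\vecbeta$. Summing the work of block extraction, construction of $\G'$, and Sidelnikov--Shestakov then bounds the total cost by $\OCompl{k^2 n}$ operations over $\Fqm$, as claimed.
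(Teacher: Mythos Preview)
Your approach coincides with the paper's: pick one column per block to obtain a length-$\shots$ GRS code, then run Sidelnikov--Shestakov. You actually go further than the paper by explicitly exhibiting a compatible $\vecbeta$ and arguing that the block-wise rank-one structure together with the first-coordinate projection determines the code; the paper simply asserts that the recovered $\a,\v$ are ``suitable''. (One minor slip: the column multipliers of $\langle \G' \rangle$ are $v_i \beta_1^{(i)}$, not $u_{i,1}$, because your $\u_i$ is an \emph{arbitrary} nonzero row of $\shot{\G}{i}$ and hence only a scalar multiple of $v_i\shot{\vecbeta}{i}$. This does not affect the rest of the argument.)

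There is, however, a missing case. Sidelnikov--Shestakov requires the GRS code to have a nontrivial dual, i.e.\ $k < \shots$. When $\shots = k$---and note that $\shots \geq k$ is forced here, since for $\aut = \Id$ each block $\shot{\G}{i}$ has $\Fqm$-rank one and hence $\rk_{q^m}(\G) \leq \shots$---the code $\langle \G' \rangle$ is all of $\Fqm^{\shots}$ and the attack yields no information. The paper handles this separately and trivially: any vector of distinct nonzero evaluation points, e.g.\ $\a = (1, \alpha, \dots, \alpha^{\shots-1})$ for a primitive $\alpha \in \Fqm$, together with $\v = (1,\dots,1)$, works because the corresponding Vandermonde matrix already spans $\Fqm^{\shots}$. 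Your compatibility argument then carries over verbatim, so the fix is immediate, but the case distinction is needed for the proof to be complete.
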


	\begin{proof}
		Recall from~\autoref{lem:square_code_dim} that the matrix consisting of one column of each block $\shot{\G}{i}$ of the generator matrix $\G$ generates a GRS code of length $\ell$ and dimension $k$. If $\ell > k$ this is a nontrivial GRS code, whereas for $\ell \leq k$ the code is the whole space $\mathbb F_{q^m}^\ell$. For the description of the recovery process we will differentiate between these two cases:
		\begin{enumerate}
		    \item In the case where $\ell \geq k$ holds we can simply choose the evaluation points to be $\a = (1,\alpha, \alpha^2,\dots, \alpha^\ell)$ for a primitive element $\alpha \in \mathbb F_{q^m}$. Clearly, the Vandermonde matrix with these parameters is full-rank and hence spans the whole space, i.e., it is a generator matrix of the trivial RS code. Furthermore, $1,\alpha, \alpha^2,\dots, \alpha^\ell$ represent distinct conjugacy classes (since we consider zero derivation) and are hence a valid choice for the LRS code. Note that we do not need to consider column multipliers in this setting, i.e., we can assume $\v$ to be the all-one vector.
			\item In the other case, i.e., where $\ell <k$, the resulting GRS code and its dual code are nontrivial, i.e., they both have minimum distance greater than one. We can now use the Sidelnikov--Shestakov algorithm from~\cite{sidelnikov1992insecurity} on the parity-check matrix of our GRS code to find suitable $\a$ and $\v \in \Fqm^\shots$.
			This requires $\OCompl{k^2 n}$ operations in $\Fqm$.
		\end{enumerate}
	\qed
	\end{proof}
Depending on how the code is disguised in a potential cryptosystem, an attacker can use the fact about the square-code dimension from~\autoref{lem:square_code_dim} to find suitable subcodes of the public code.
	Then, the parameter-recovery algorithm from~\autoref{thm:grs_recovery} can be applied to the obtained subcodes.

	\subsection{Overbeck-like Approach}

	In the literature, there are three different approaches for recovering the secret parameters of Gabidulin codes based
	on ideas similar to Overbeck's distinguisher:
	\begin{enumerate}
	    \item Overbeck~\cite{Ove2005} considers the sum of the codes obtained by repeated application of the Frobenius
				automorphism until a code of codimension one is obtained.
				The secret parameters are then recovered from a generator of the one-dimensional dual code.
		\item \acl{HaMaR}~\cite{horlemann2018extension} compute the intersection of the codes that arise from repeated
				application of the Frobenius automorphism until the result is a one-dimensional code.
				A generator of the latter yields the secret parameters of the code.
		\item Another approach by \acl{HaMaR}~\cite{horlemann2016rankBased} maps the task to the problem of finding
				rank-one codewords in the code generated by the public matrix and a corrupted codeword.
	\end{enumerate}

	We present the first two approaches for \ac{LRS} codes in the zero-derivation regime where the automorphism $\aut$
	and the evaluation parameters $\a \in \Fqm^{\shots}$ are known.
	Note that the third technique is also applicable to our setting but omitted for brevity.
	Moreover, the recovery methods extend to \ac{GLRS} codes by executing them after guessing the block multipliers,
	similar to the distinguishing strategy explained in~\autoref{sec:overbeck_distinguisher}.

	\begin{theorem}
		Let $\G \in \Fqm^{k \times n}$ denote a generator matrix of an \ac{LRS} code $\mycode{C} \defeq \linRS{\vecbeta, \a; \n, k}$ with respect to a known
		automorphism $\aut$ and zero derivation.
		If the evaluation parameters $\a \in \Fqm^{\shots}$ are known, we can recover code locators $\tilde{\vecbeta} \in \Fqm^{n}$
		such that $\opMoore{k}{\tilde{\vecbeta}}{\a}$ generates $\mycode{C}$ in $\OCompl{n^5}$ operations in $\Fqm$.
	\end{theorem}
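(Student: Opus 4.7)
The plan is to adapt Overbeck's recovery strategy for Gabidulin codes to the zero-derivation \ac{LRS} setting by applying the Overbeck-type extension operator $\Gamma$ once on the primal side and once on the dual side. First I would apply $\Gamma_{\a}^{n-k-1}$ from~\eqref{eq:def_full_operator} to $\G$; by \autoref{cor:overbeck_distinguisher}, the resulting matrix generates the extended code $\linRS{\vecbeta, \a; \n, n-1}$, whose codimension-one right kernel I compute by Gaussian elimination. By~\eqref{eq:lrs_dual_zero_der}, this kernel equals $\linRS{\vecalpha, \autinv(\a); \n, 1}_{\autinv}$, so its generator is (up to scaling) the annihilator $\vecalpha$ characterized by the syndrome equations~\eqref{eq:syndrome_eqs_lrs}.

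Next I would apply the $\autinv$-analog of $\Gamma$, built from the zero-derivation $\autinv$-operator $b \mapsto \autinv(b) \cdot \autinv(a_i)$ on block $i$, with depth $n-2$ to the row vector $\vecalpha$. By the symmetric analog of \autoref{cor:overbeck_distinguisher} in the $\autinv$-\ac{LRS} setting, the output generates $\linRS{\vecalpha, \autinv(\a); \n, n-1}_{\autinv}$ of codimension one. A second kernel computation yields a vector $\tilde{\vecbeta}$ which, by another application of~\eqref{eq:lrs_dual_zero_der}, is an $\autinv$-side annihilator of $\vecalpha$.

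The hard part is to show that $\opMoore{k}{\tilde{\vecbeta}}{\a}$ generates $\mycode{C}$. The key observation is that the $\autinv$-analog of~\eqref{eq:syndrome_eqs_lrs} characterizing $\tilde{\vecbeta}$ as an annihilator of $\vecalpha$ is \emph{equivalent} to the original system: applying $\aut^{h-1}$ elementwise to the $\autinv$-system and using the identity $\opexp{a}{b}{l} = \aut^{l}(b) \genNorm{l}{a}$ (valid in the zero-derivation case) converts one system into the other. Hence $\vecbeta$ itself solves the $\autinv$-annihilator system, and by uniqueness of its one-dimensional solution space we obtain $\tilde{\vecbeta} = c\vecbeta$ for some $c \in \Fqm^{\ast}$. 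Since $\opexp{a_i}{c\beta_j^{(i)}}{l} = \aut^{l}(c) \opexp{a_i}{\beta_j^{(i)}}{l}$, a short calculation yields $\opMoore{k}{c\vecbeta}{\a} = \diag(c, \aut(c), \dots, \aut^{k-1}(c)) \cdot \opMoore{k}{\vecbeta}{\a}$, so both matrices generate the same code $\mycode{C}$. The dominating cost is the first Gaussian elimination on an $(n-k)k \times n$ matrix and stays within $\OCompl{n^5}$ operations in $\Fqm$.
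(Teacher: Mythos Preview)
Your argument is correct and essentially matches the paper's first (Overbeck-type) recovery method: both compute $\Gamma_{\a}^{n-k-1}(\G)$, read off the one-dimensional kernel $\langle\vecalpha\rangle$, and then recover $\tilde{\vecbeta}$ from $\vecalpha$. The only real difference is that the paper phrases the final step as ``solve the syndrome equations~\eqref{eq:syndrome_eqs_lrs} for $\vecbeta$'', whereas you implement it as a second Overbeck extension on the $\autinv$-side followed by another kernel computation. Your equivalence argument (apply $\aut^{h-1}$ to each $\autinv$-syndrome equation and use $\opexp{a}{b}{l}=\aut^{l}(b)\genNorm{l}{a}$) shows that these two formulations produce the same $\Fqm$-linear system, so the distinction is purely presentational; your version has the virtue of making explicit why the resulting system is $\Fqm$-linear in $\tilde{\vecbeta}$ and has a one-dimensional solution space.

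For comparison, the paper also offers a second, more direct route that avoids dualization altogether: compute the $(k-1)$-fold intersection $\bigcap_{i=0}^{k-1}\langle\opexp{\a}{\G}{i}\rangle$, which by \autoref{cor:overbeck_intersection_distinguisher} equals the one-dimensional code $\langle\opexp{\a}{\vecbeta}{k-1}\rangle$, and then apply the explicit zero-derivation inverse $\opexpinv{a_i}{\cdot}{k-1}=\aut^{-(k-1)}(\cdot/\genNorm{k-1}{a_i})$ blockwise to obtain $\aut^{-(k-1)}(c)\cdot\vecbeta$. This bypasses the duality result~\eqref{eq:lrs_dual_zero_der} and the syndrome equations entirely, at the cost of needing the intersection corollary; your approach buys a cleaner ``Overbeck twice'' symmetry and relies only on the addition-side \autoref{cor:overbeck_distinguisher}.
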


	\begin{proof}
		First note that any $\Fqm^{\ast}$-multiple $\tilde{\vecbeta}$ of $\vecbeta$ is sufficient because
		\begin{equation}
			\opMoore{k}{\tilde{\vecbeta}}{a} = \diag\left((c, \aut(c), \dots, \aut^{k-1}(c))\right) \cdot \opMoore{k}{\vecbeta}{\a}
		\end{equation}
		holds for $\tilde{\vecbeta} \defeq c \cdot \vecbeta$ with $c \in \Fqm^{\ast}$.
		Since the diagonal matrix has full rank, the row spaces of $\opMoore{k}{\tilde{\vecbeta}}{\a}$ and
		$\opMoore{k}{\vecbeta}{\a}$ both equal $\mycode{C}$.
		We show how to recover such a $\tilde{\vecbeta} \in \Fqm^{n}$ with the first two of the three approaches mentioned above:
		\begin{enumerate}
		    \item
			From~\autoref{cor:overbeck_distinguisher}, we obtain the equality
			$\langle \Gamma_{\a}^{n-k-1}(\G) \rangle = \linRS{\vecbeta, \a; \n, n - 1}$ and the dual $\mycode{D}$ of
			this code has dimension one.
			The solution $\H \in \Fqm^{n}$ of the system $\Gamma_{\a}^{n-k-1}(\G) \cdot \H^{\top} = \0$ is a generator
			matrix (or rather a generator vector) of $\mycode{D}$.
			Since we are in the zero-derivation case, we can use the result~\eqref{eq:lrs_dual_zero_der} about duals of
			\ac{LRS} codes and recover a suitable $\tilde{\vecbeta}$ from $\H$ via~\eqref{eq:syndrome_eqs_lrs}.
			\item
			We first compute the intersection space $\bigcap_{i=0}^{k-1} \langle \opexp{\a}{\G}{i} \rangle$ which is equal to
			$\linRS{\opexp{\a}{\vecbeta}{k-1}, \a; \n, 1}$ according to~\autoref{cor:overbeck_intersection_distinguisher}.
			Therefore, every generator $\g \in \Fqm^{n}$ of this space (and in particular the one that we computed) has
			the form $c \cdot \opexp{\a}{\vecbeta}{k-1}$ for a $c \in \Fqm^{\ast}$.
			Note that, in the zero-derivation case, the inverse of the operator $\opexp{a}{\cdot}{i}$ for fixed $a \in \Fqm$ and $i \geq 0$ is
			\begin{equation}
			    \opexpinv{a}{b}{i} \defeq \aut^{-i}\left(\frac{b}{\genNorm{i}{a}}\right)
				\quad \text{for all } b \in \Fqm.
			\end{equation}
			We use this fact to derive the following equation from $\g = c \cdot \opexp{\a}{\vecbeta}{k-1}$:
			\begin{equation}
			    \aut^{-k+1}\left( \left( \frac{\shot{\g}{1}}{\genNorm{k-1}{a_1}} \mid \dots \mid \frac{\shot{\g}{\shots}}{\genNorm{k-1}{a_{\shots}}} \right) \right)
				= \aut^{-k+1}(c) \cdot \vecbeta.
			\end{equation}
			Solving the obtained system of linear equations lets us recover a suitable $\tilde{\vecbeta}$.
		\end{enumerate}
		The complexity is in both cases dominated by computing the reduced row-echelon form of $\Gamma_{\a}^{n-k-1}(\G)$
		and $\Gamma_{\a}^{k-1}(\G)$, respectively.
		This can be achieved in $\OCompl{n^5}$ operations in $\Fqm$.
	\qed
	\end{proof}

	\section{Conclusion}

	We introduced \ac{GLRS} codes as \ac{LRS} codes with nonzero block multipliers and proposed two distinguishers for this code family that are inspired by similar techniques in the Hamming and the rank
	metric.
	The square-code distinguisher works for the identity automorphism and zero derivation, whereas the
	Overbeck-like distinguisher can handle arbitrary automorphisms and derivations.
	Both have polynomial runtime when the automorphism $\aut$, the derivation $\der$, and in the latter case additionally the evaluation parameters $\a$ and the
	block multipliers $\v$ are known.

	Since many McEliece-like cryptosystems use isometric disguising, we further studied codes that are semilinearly
	equivalent to \ac{GLRS} codes.
	We showed that \ac{GLRS} codes are closed under semilinear
	equivalence for a fixed automorphism and some possible choices for the derivation.

	Finally, we partially solved the problem of recovering a canonical generator matrix (and thus finding an efficient decoder) from an
	arbitrary generator matrix of a \ac{GLRS} code in the zero-derivation case.
	The complexity is again polynomial if either $\aut=\mathrm{Id}$ or $\aut$, $\v$ and $\a$ are known.
	More precisely, we showed that the square-code code approach allows to recover suitable evaluation parameters $\a$ and block multipliers $\v$ of a GLRS code
	in the identity-automorphism setting, and that an Overbeck-like strategy can recover suitable code locators $\vecbeta$ of
	a GLRS code for arbitrary automorphisms and zero derivations if $\a$ and $\v$ are known.

	This work is a first step towards building quantum-secure cryptosystems in the sum-rank metric.
	Naturally, many other research questions arise in this field:
	As simulations show, the Overbeck-like distinguisher seems not to work when
	the wrong evaluation parameters are used.
	This is the case even when the parameters are chosen from the correct conjugacy classes, what makes it interesting to
	study.
	Another idea is to find a new operation with respect to which the square-code distinguisher works also for
	arbitrary automorphisms.

	We further want to investigate more distinguishing methods as e.g.\ augmenting the generator matrix or applying
	near-isometries and see also how \ac{GLRS} codes and their distinguishers carry over to the skew metric.

	\vspace{-.1cm}

    \bibliographystyle{splncs04}
% Added from bbl:

\end{document}